\documentclass[paper]{JFM-FLM_Au}

\usepackage{bm,mathtools,microtype}
\makeatletter
\def\oddabsfooterflag{\hbox to \textwidth{\hfill{\cppagefont\thepage}}}
\def\evenabsfooterflag{\hbox to \textwidth{{\cppagefont\thepage}\hfill}}
\def\pagelimitfooter{\hbox to \textwidth{{\cppagefont\thepage}\hfill}}
\def\ps@headings{\let\@mkboth\markboth
  \def\@oddhead{\hfill{\itshape\@righttitle}\hfill}
  \def\@evenhead{\hfill\itshape\@lefttitle\hfill}
  \def\@oddfoot{\hbox to \textwidth{\hfill{\cppagefont\thepage}}}
  \def\@evenfoot{\hbox to \textwidth{{\cppagefont\thepage}\hfill}}
  \def\sectionmark##1{\markboth{##1}{}}
  \def\subsectionmark##1{\markright{##1}}}
\makeatother
\hypersetup{
  hidelinks,
  pdftitle={Three-dimensional shock-weak-discontinuity interactions: curvature deposition and post-shock flow response},
  pdfauthor={Alexander Omelchenko},
  pdfsubject={Curvature deposition and post-shock flow response at three-dimensional shock-weak-discontinuity junctions}
}

\newcommand{\sjump}[1]{\mathopen{[\![}#1\mathclose{]\!]}}
\newcommand{\cjump}[1]{\mathopen{[\![}#1\mathclose{]\!]}_{\mathcal C}}

\newtheorem{theorem}{Theorem}[section]
\newtheorem{proposition}[theorem]{Proposition}
\newtheorem{lemma}[theorem]{Lemma}
\newtheorem{corollary}[theorem]{Corollary}

\newenvironment{proof}{\par\noindent\textit{Proof.}\ }{\hfill$\square$\par}

\lefttitle{A. Omelchenko}
\righttitle{Curvature deposition at three-dimensional shock junctions}
\title{Three-dimensional shock--weak-discontinuity interactions: curvature deposition and post-shock flow response}
\author{Alexander Omelchenko\aff{1}}
\affiliation{\aff{1}Constructor University Bremen, Campus Ring 1, 28759 Bremen, Germany}
\corresau{Alexander Omelchenko, \email{aomelchenko@constructor.university}}

\begin{document}
\maketitle

\begin{abstract}
A weak acoustic, entropy or vortical sheet can cross a finite-strength three-dimensional shock without producing a kink, yet it can deposit a discontinuity of curvature along the intersection curve.  We derive a local Euler junction law that separates the active normal-plane scattering problem from its three-dimensional geometric realization.  The scattering selects one scalar curvature amplitude, while the orientation and pre-existing shape of the shock determine the full curvature tensor and the downstream flow response.  A controlled pair of saddle-shock interactions has identical incident and outgoing modal amplitudes but produces post-shock pressure-gradient vectors separated by 31.6 degrees and vorticity vectors separated by 25.6 degrees.  For a perfect gas, acoustic incidence also possesses curvature-neutral branches: outgoing acoustic, entropy and shear waves remain finite while the curvature channel cancels.  Convected entropy and in-plane vortical sheets deposit curvature with opposite signs, and their coefficients obey an exact relation imposed by total-enthalpy conservation and acoustic orthogonality; line-tangent vorticity is curvature-transparent.  An expanding spherical shock provides an unsteady example with non-zero shock speed and non-zero intersection-line tracking speed, separating the smooth blast-wave acceleration from the deposited acceleration jump.  Near an umbilic, the same junction selects or rotates the principal-curvature frame.  The resulting law supplies the interface data needed to connect smooth curved-shock reconstructions across a continuously differentiable, piecewise twice-differentiable front.
\end{abstract}

\section{Introduction}
\label{sec:introduction}

When a weak acoustic, entropy or vortical sheet crosses a curved three-dimensional shock, it both scatters into outgoing acoustic, entropy and vortical modes and changes the shock surface itself.  The shock position and tangent plane may remain continuous along the spatial intersection curve while the one-sided curvature tensor changes abruptly.  This situation arises naturally when acoustic fronts, entropy spots or vortical sheets encounter compression shocks in three-dimensional inlets, detached bow shocks or expanding blast fronts.  It is also directly relevant to post-shock reconstruction, because the local shock shape enters the determination of pressure, velocity and vorticity gradients immediately behind a curved shock.  A normal-plane interaction calculation may determine all active modal amplitudes and yet remain unable to determine this three-dimensional flow response, because it contains no information about the curvature of the shock along the intersection curve.

Existing theories address the two parts of this problem separately.  Classical plane-shock and locally planar interaction analyses resolve the conversion of incident disturbances into acoustic, entropy and shear modes, including oblique incidence and fully three-dimensional freestream perturbations \citep{Moore1954,Ribner1954,McKenzieWestphal1968,DuckLasseigneHussaini1995,DuckLasseigneHussaini1997,HeSu2026}.  Smooth shock-change and curved-shock theories instead reconstruct post-shock derivatives from a prescribed differentiable front, and now include fully three-dimensional convex, concave and saddle-shaped shocks \citep{Molder2016,Emanuel2018,Radulescu2020,ShiEtAl2020,EmanuelMolder2022,ZhangEtAl2025}.  Neither description supplies the missing interface condition when a weak sheet leaves the shock tangent plane continuous but creates a jump in the second fundamental form.  The resulting shock is continuously differentiable and piecewise twice differentiable, so the two smooth curved-shock reconstructions require a junction law for their one-sided shape operators.

Two classical traditions motivate such a law.  The geometrical shock-dynamics constructions of \citet{Chester1954}, \citet{Chisnell1957}, \citet{Whitham1958} and \citet{Lick1966} provide scalar evolution rules for shock strength and front motion in non-uniform and quasi-one-dimensional flows.  In parallel, the discontinuity-wave theory of \citet{Jeffrey1973,Jeffrey1974}, \citet{BoillatRuggeri1979}, \citet{Ruggeri1980}, \citet{RadhaSharmaJeffrey1993} and \citet{MentrelliEtAl2008} determines reflected and transmitted weak discontinuities at a shock, with the discontinuity of shock acceleration as an additional scalar amplitude.  Related extensions treated elastic and magnetofluid media \citep{Brun1975,Morro1978,Morro1980}.  The author's earlier generalized Chester--Whitham invariant and derivative relations \citep{Omelchenko2001,Omelchenko2002} arose from the same compatibility problem.  The multidimensional step taken here is to identify the corresponding scalar as the amplitude of a rank-one junction of the full shock shape operator and to determine how the embedding of the interaction curve converts it into intrinsic geometry and downstream flow structure.

The construction combines second-order spacetime compatibility with differentiated Rankine--Hugoniot conditions.  At the two shock traces, the differentiated three-dimensional Euler operator splits exactly into an active four-variable subsystem in the plane normal to the interaction curve and a passive line-tangent shear channel.  The active boundary problem selects a scalar response \(\kappa\), and the physical shock reconstructs it as
\begin{equation}
  \cjump{S}=\kappa\,\eta_s\otimes\eta_s^\flat,
  \qquad
  \cjump{K_G}=k_{\mathcal C}\kappa.
  \label{eq:introduction-main-law}
\end{equation}
Here \(S\) is the shape operator, \(K_G=\det S\) is the Gaussian curvature, \(\eta_s\) is the direction in the shock surface normal to the interaction curve, and \(k_{\mathcal C}=\langle S^-\boldsymbol{t},\boldsymbol{t}\rangle\) is the pre-existing normal curvature along that curve.  Thus the gas dynamics selects one scalar amplitude, whereas the three-dimensional embedding determines what that amplitude does to the physical front.  Figure~\ref{fig:conceptual-junction} shows a schematic \(C^1\), piecewise-\(C^2\)
shock junction and the adapted frame used to separate the active normal-plane
response from the line-tangent direction.

\begin{figure}
  \centering
  \includegraphics[width=0.92\columnwidth]{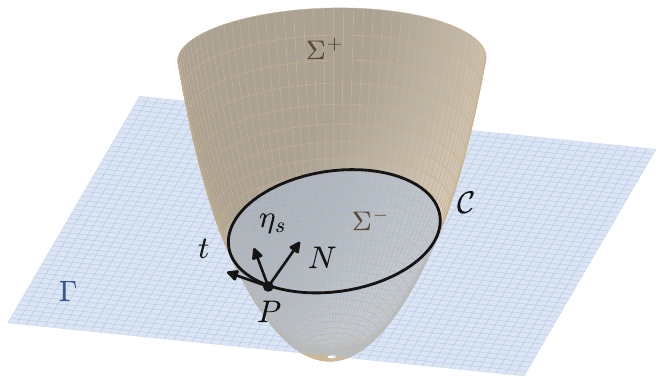}
  \caption{Local geometry of a three-dimensional shock--weak-discontinuity junction.  A weak sheet \(\Gamma\) crosses the shock surface \(\Sigma\) along the interaction curve \(\mathcal C\), dividing it into the two smooth pieces \(\Sigma^-\) and \(\Sigma^+\), which share a common tangent plane along \(\mathcal C\) but have different cross-curve curvatures.  At the marked event \(P\), \(N\) is the shock normal, \(\boldsymbol{t}\) is tangent to \(\mathcal C\), and \(\eta_s=N\times\boldsymbol{t}\) lies in the shock surface, is normal to the interaction curve, and is oriented from \(\Sigma^-\) towards \(\Sigma^+\).  The active Euler problem is posed in \(\operatorname{span}\{N,\eta_s\}\), while line-tangent shear is passive.}
  \label{fig:conceptual-junction}
\end{figure}

The orientation dependence has an observable gas-dynamic consequence.  We construct two local interactions on the same inlet-inspired saddle shock whose complete active normal-plane boundary problems are identical.  They have the same incident trace, outgoing modal amplitudes and scalar curvature response, but the interaction curves have different embeddings in the shock.  Consequently their Gaussian-curvature changes differ, and a one-sided curved-shock reconstruction gives post-interaction pressure-gradient vectors separated by \(31.6^\circ\) and vorticity vectors separated by \(25.6^\circ\).  The comparison isolates a mechanism that cannot be recovered from modal scattering data alone: identical active scattering can produce different three-dimensional post-shock gradient and rotational structures.

The perfect-gas coefficients reveal two further effects.  Acoustic incidence possesses curvature-neutral branches on which the outgoing acoustic, entropy and in-plane shear waves remain non-zero but the geometric forcing cancels.  Convected entropy and in-plane vortical sheets instead deposit curvature throughout the projected-supersonic sector with opposite signs.  Their coefficients satisfy an exact identity fixed by conservation of total enthalpy and by the orthogonality of the outgoing acoustic covector to a downstream convective mode; a purely line-tangent vortical sheet excites only the passive channel and leaves the shock curvature unchanged.  These results connect the junction law directly to elementary constituents of shock--turbulence interaction.

The spacetime formulation also distinguishes the motion of the shock from the motion of the intersection curve across it.  An expanding spherical shock supplies an example with both non-zero normal shock speed and non-zero tracking speed, so that the smooth blast-wave acceleration can be separated from the acceleration jump deposited by the weak sheet.  Near an umbilic shock point, the same rank-one update removes the curvature degeneracy and selects a principal frame; near, but not at, an umbilic, a response comparable with the small principal-curvature gap can rotate the axes by a finite angle without any singularity of the Euler coefficient.

The theory is local and inviscid.  It supplies a first-derivative interface law and the geometric input needed to connect smooth one-sided curved-shock descriptions; it does not replace a global inlet, blunt-body or blast-wave solution.  Section~\ref{sec:local-euler-junction-law} derives the local Euler junction law and the active--passive selection rule.  Section~\ref{sec:steady-euler-curvature-transfer} evaluates the perfect-gas coefficients for acoustic, entropy and vortical incidence.  Section~\ref{sec:three-dimensional-inlet-response} reconstructs the three-dimensional shock and downstream flow response.  Section~\ref{sec:moving-interaction-curves} treats moving spherical shocks and near-umbilic frame selection.  The detailed conormal, Galilean and shock-polar algebra is collected in the supplementary material.

\section{Local Euler junction law}
\label{sec:local-euler-junction-law}
\label{sec:spacetime-transmission}
\label{sec:spacetime-euler-factorization}

This section isolates the local law that converts an incident weak-sheet trace
into a change of shock geometry.  The argument has a simple physical
structure.  Compatibility first shows that a continuously differentiable
shock can acquire only one independent second-derivative jump at the
intersection curve.  The differentiated Rankine--Hugoniot conditions then
select the amplitude of that jump.  Finally, the three-dimensional Euler
operator separates into an active normal-plane problem and a passive shear
channel along the interaction curve.  The full conormal calculation and the
row-reduction algebra are given in the supplementary material; here we retain
the statements needed for the gas-dynamic developments below.

Let the shock history be an oriented world hypersurface \(\Sigma\), with
upstream and downstream Euler states \(U_1\) and \(U_2\), and let a weak
characteristic sheet \(\Gamma\) meet it transversally along the codimension-two
world sheet \(\mathcal C\).  The instantaneous spatial section of this
configuration is shown schematically in figure~\ref{fig:conceptual-junction}.
We use \(\sjump{\cdot}_\Sigma\) for the upstream-to-downstream jump across the
strong shock and \(\cjump{\cdot}\) for the jump between the two smooth shock
pieces \(\Sigma^-\) and \(\Sigma^+\) that meet along \(\mathcal C\).  The state is continuous across \(\Gamma\), although
its first derivatives may jump.  The shock itself is assumed \(C^1\) and
piecewise \(C^2\): its position and tangent hyperplane are continuous, but its
curvature need not be.

Let \(\nu\) be the common shock conormal and
\[
  \mathcal B_\nu(X,Y)=-(D_X\nu)(Y),
  \qquad X,Y\in T\Sigma,
\]
its conormal second fundamental form.  Choose a tangent direction
\(\eta\in T\Sigma\) crossing \(\mathcal C\) within the shock world
hypersurface, and let \(\vartheta\in T^*\Sigma\) annihilate
\(T\mathcal C\) and satisfy \(\vartheta(\eta)=1\).

\begin{proposition}[Second-order compatibility at the junction]
\label{prop:spacetime-rank-one}
There is a scalar \(\alpha\) such that
\begin{equation}
  \boxed{\cjump{\mathcal B_\nu}=\alpha\,\vartheta\otimes\vartheta.}
  \label{eq:spacetime-rank-one}
\end{equation}
Hence the curvature/kinematic junction has rank at most one.
\end{proposition}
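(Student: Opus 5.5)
The plan is to represent the shock world hypersurface locally as a graph over its common tangent hyperplane, and to reduce the claim to Hadamard's lemma for second derivatives of a \(C^1\), piecewise \(C^2\) function. Near a point of \(\mathcal C\), pick coordinates \((y^0,y^1,y^2,y^3)\) on \(\Sigma\) adapted to the junction. Here \(\mathcal C=\{y^1=0\}\) and \(\eta=\partial_{y^1}\), so \(\vartheta=dy^1\) annihilates \(T\mathcal C\) and satisfies \(\vartheta(\eta)=1\). Since \(\Sigma\) is \(C^1\), the conormal \(\nu\) is a continuous field along \(\Sigma\), and so are the tangent spaces. On each piece \(\Sigma^\pm\), \(\mathcal B_\nu\) is a well-defined symmetric bilinear form, and it has one-sided limits on \(\mathcal C\). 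The jump \(\cjump{\mathcal B_\nu}\) is therefore a symmetric bilinear form on the common space \(T\Sigma|_{\mathcal C}\).

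The key step is to show that \(\cjump{\mathcal B_\nu}(X,Y)=0\) whenever \(X\in T\mathcal C\), with \(Y\in T\Sigma\) arbitrary. I will extend \(Y\) to a vector field \(\tilde Y\) tangent to \(\Sigma\) that is continuous across \(\mathcal C\); the coordinate field \(\partial_{y^a}\) will do. Then \(\mathcal B_\nu(X,Y)=-(D_X\nu)(\tilde Y)\) on each side. This expression differentiates \(\nu\) only along \(X\), that is, along \(\mathcal C\) itself. The restriction \(\nu|_{\mathcal C}\) is a single continuous field because the tangent hyperplane is continuous, and it is differentiable along \(\mathcal C\) because each one-sided restriction is \(C^1\) up to \(\mathcal C\) and the two coincide there. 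Hence \(D_X\nu\) computed from \(\Sigma^-\) and from \(\Sigma^+\) agree, and the jump vanishes.

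Equivalently, I could write \(\Sigma\) locally as a graph \(x^\perp=f(y)\) in a frame where \(\nu\) is the conormal at the base point. Then \(\mathcal B_\nu\) at the point is, up to a nonzero normalization, the Hessian of \(f\). Because \(f\in C^1\) with \(\cjump{f}=\cjump{\partial f}=0\), tangential differentiation along \(\{y^1=0\}\) gives \(\cjump{\partial_a\partial_b f}=0\) for \(a\neq 1\), and this is the classical Hadamard compatibility statement. The main obstacle is bookkeeping rather than substance. One must check that the continuous extension \(\tilde Y\) and the covariant derivative \(D\) do not introduce one-sided terms, since \(D\) is the flat spacetime connection and is independent of the side. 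One must also check that \(\mathcal B_\nu\) is tensorial, so that it depends only on the values of \(X\) and \(Y\) at the point. The latter follows from the standard symmetry/tensoriality of the second fundamental form, applied on each smooth piece separately.

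Finally comes the linear-algebra step. Since \(\cjump{\mathcal B_\nu}\) is symmetric and vanishes whenever either argument lies in the hyperplane \(T\mathcal C=\ker\vartheta\subset T\Sigma\), decompose \(X=\vartheta(X)\eta+X_\parallel\) and \(Y=\vartheta(Y)\eta+Y_\parallel\) with \(X_\parallel,Y_\parallel\in T\mathcal C\). Every term containing \(X_\parallel\) or \(Y_\parallel\) drops out, which leaves \(\cjump{\mathcal B_\nu}(X,Y)=\vartheta(X)\vartheta(Y)\,\cjump{\mathcal B_\nu}(\eta,\eta)\). Setting \(\alpha=\cjump{\mathcal B_\nu}(\eta,\eta)\) gives \eqref{eq:spacetime-rank-one} and hence rank at most one. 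Changing \(\eta\) by an element of \(T\mathcal C\) leaves \(\alpha\) unchanged, because the mixed jumps vanish, so \(\alpha\) is well defined once the normalization \(\vartheta(\eta)=1\) is fixed.
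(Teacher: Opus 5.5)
Your proposal is correct and follows the paper's proof. The common \(C^1\) conormal, restricted to \(\mathcal C\), has identical one-sided derivatives along \(T\mathcal C\), so \(\cjump{\mathcal B_\nu}(X,\cdot)=0\) for \(X\in T\mathcal C\); symmetry together with \(\dim T\Sigma/T\mathcal C=1\) then leaves only \(\alpha\,\vartheta\otimes\vartheta\). The extra bookkeeping you add, namely the extension \(\tilde Y\) (not needed, since \((D_X\nu)(Y)\) is already pointwise in \(Y\)) and the Hadamard graph variant, only spells out what the paper's two-line argument leaves implicit.
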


\begin{proof}
The common \(C^1\) conormal has identical one-sided derivatives in every
direction tangent to \(\mathcal C\).  Thus
\(\cjump{\mathcal B_\nu}(X,\cdot)=0\) for
\(X\in T\mathcal C\).  Since \(T\Sigma/T\mathcal C\) is one-dimensional and
\(\mathcal B_\nu\) is symmetric, only the transverse component remains,
giving \eqref{eq:spacetime-rank-one}.
\end{proof}

The scalar \(\alpha\) is the single geometric degree of freedom left by
compatibility.  With unit spatial normals on a steady shock it is the
cross-track curvature jump \(\kappa\).  For a moving interaction it also
generates the compatible jumps of normal shock-speed gradient and intrinsic
normal acceleration derived in \S\ref{sec:moving-interaction-curves}.

Write the spacetime Euler flux contracted with a covector \(\xi\) as
\(\mathfrak F_\xi(U)\), and its state Jacobian as
\(\mathcal A_\xi(U)=D_U\mathfrak F_\xi(U)\).  The strong shock satisfies
\[
  \sjump{\mathfrak F_\nu}_\Sigma=0.
\]
Differentiating this condition in the direction \(\eta\) on the two smooth
shock pieces and subtracting gives
\begin{equation}
  \boxed{
  \mathcal A_\nu(U_2)\delta q_2
  -\mathcal A_\nu(U_1)\delta q_1
  +h_\alpha\alpha=0,}
  \qquad
  h_\alpha=-\sjump{\mathfrak F_{\widetilde\vartheta}}_\Sigma .
  \label{eq:local-differentiated-rh}
\end{equation}
Here \(\delta q_i\) is the change of the derivative trace in state \(i\)
when the weak sheet passes the interaction curve, and
\(\widetilde\vartheta\) is any ambient extension of \(\vartheta\).  The
vector \(h_\alpha\) is the shock-geometry sensitivity column: it is the
change of the Rankine--Hugoniot residual produced by a unit change of shock
conormal.  Its definition is independent of the chosen extension because the
zeroth-order shock already satisfies the Rankine--Hugoniot conditions.

Resolve \(\delta q_1\) and \(\delta q_2\) into the prescribed incident weak
mode and the admissible outgoing acoustic, entropy and shear modes.  Their
boundary images form the finite-dimensional system
\begin{equation}
  \boxed{
  B_{\rm out}a_{\rm out}+h_\alpha\alpha
  =b_{\rm in}a_{\rm in}.}
  \label{eq:local-euler-boundary-system}
\end{equation}
Each column of \(B_{\rm out}\) is the residual generated in
\eqref{eq:local-differentiated-rh} by a unit outgoing characteristic trace;
\(b_{\rm in}\) is the corresponding image of the normalised incident trace.
The usual non-glancing and constant-multiplicity assumptions ensure that the
mode count and these trace bases are well defined at the event.

In the scalar regime used below, \(B_{\rm out}\) has a one-dimensional left
nullspace.  Let \(\psi_{\rm loc}\) be any non-zero covector satisfying
\(\psi_{\rm loc}^{\mathsf T}B_{\rm out}=0\).  It filters out every response
that can be supplied by outgoing waves and leaves the one boundary channel
that must be balanced by shock geometry.  Applying it to
\eqref{eq:local-euler-boundary-system} gives
\begin{equation}
  \boxed{
  \alpha
  =K_{\rm loc}a_{\rm in},
  \qquad
  K_{\rm loc}
  =\frac{\psi_{\rm loc}^{\mathsf T}b_{\rm in}}
         {\psi_{\rm loc}^{\mathsf T}h_\alpha}.}
  \label{eq:spacetime-scalar-law}
\end{equation}
The denominator is the local transversality pairing: it is non-zero exactly
when the geometric column is not reproducible by the outgoing modal columns.
Equation~\eqref{eq:spacetime-scalar-law} is therefore a local solvability law,
not by itself a frequency-dependent stability criterion.

We now specialise the operator to three-dimensional Euler flow.  At a fixed
interaction event choose the orthonormal frame
\((N,\eta_s,\boldsymbol t)\), where \(N\) is the shock normal,
\(\boldsymbol t\) is tangent to the spatial interaction curve and
\(\eta_s=N\times\boldsymbol t\).  The active normal plane is
\(\Pi=\operatorname{span}\{N,\eta_s\}\).  Write the primitive state as
\[
  Y=(\rho,a,b,\mathfrak s,w)^{\mathsf T},
  \qquad
  \boldsymbol u=aN+b\eta_s+w\boldsymbol t,
\]
where \(a\) is the normal velocity component and \(c\) denotes the sound
speed.  Because the weak discontinuity has a continuous zeroth-order state
trace, each upstream and downstream state approaches the same limit on the
two shock pieces.  For a genuine shock with non-zero mass flux and pressure
jump, the zeroth-order momentum and mass conditions then fix the same
oriented normal and normal shock speed on both sides of \(\mathcal C\); no
kink is introduced.

Tangential momentum conservation gives the familiar continuity of tangential
velocity,
\begin{equation}
  b_1=b_2=:b_\Sigma,
  \qquad
  w_1=w_2=:w_\Sigma.
  \label{eq:common-tangential-velocity-compact}
\end{equation}
The two components have different roles.  The direction \(\eta_s\) lies in
the active characteristic plane, so \(b_\Sigma\) enters the Doppler factors
and the acoustic mode geometry.  The direction \(\boldsymbol t\) is tangent
to every participating hypersurface at the event; \(w_\Sigma\) is only a
common line-tangent drift.

An inertial Galilean subtraction by \(w_\Sigma\boldsymbol t\), followed by an
invertible conservative row transformation \(\mathcal R_\Sigma\), makes this
decoupling exact at the two shock traces.  Put
\(\widehat Y=(\rho,a,b,\mathfrak s)^{\mathsf T}\) and
\(\mu_\xi=\xi_0+av_N+bv_\eta\) for
\(\xi=\xi_0dt+v_NN^\flat+v_\eta\eta_s^\flat\).

\begin{theorem}[Active normal-plane Euler block]
\label{thm:spacetime-four-plus-one}
At either finite-strength shock trace,
\begin{equation}
  \boxed{
  \mathcal R_\Sigma\mathcal A_\xi(Y_i)
  =
  \begin{pmatrix}
    \widehat{\mathcal A}_\xi(\widehat Y_i)&0\\
    0&\rho_i\mu_{\xi,i}
  \end{pmatrix},
  \qquad i=1,2,}
  \label{eq:spacetime-euler-block-symbol}
\end{equation}
where \(\widehat{\mathcal A}_\xi\) is the full Euler symbol in the normal
plane \(\Pi\).  Moreover,
\(\mathcal R_\Sigma h_\alpha=(\widehat h_\alpha,0)^{\mathsf T}\).
Thus the differentiated shock problem is the direct sum of an active
four-variable Euler boundary problem and a passive scalar equation for
line-tangent shear.
\end{theorem}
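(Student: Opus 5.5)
The plan is to compute the Euler symbol explicitly in primitive variables after the Galilean subtraction, and then exhibit \(\mathcal R_\Sigma\) as a concrete change of rows. Write the quasilinear primitive form of the three-dimensional Euler system, \(\partial_tY+\sum_j A^{(j)}(Y)\partial_jY=0\), and contract it with a covector \(\xi=\xi_0dt+v_NN^\flat+v_\eta\eta_s^\flat\) that has no \(\boldsymbol t\)-component. We take this restriction because every covector in the problem annihilates \(\boldsymbol t\): the conormals of \(\Sigma\), of \(\Gamma\) and of the outgoing characteristic sheets, and also \(\vartheta\) and its extension. All of these hypersurfaces contain \(\mathcal C\), and \(\mathcal C\) is tangent to \(\boldsymbol t\) at the event.

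\textbf{Step 1: the primitive symbol.} Pass to the frame moving with \(w_\Sigma\boldsymbol t\), which is an inertial Galilean change since \(w_\Sigma\) is the common trace value fixed at the event. In this frame \(w_i=0\) at both traces by \eqref{eq:common-tangential-velocity-compact}. The primitive symbol is then \(\mu_\xi I+\) (pressure and divergence couplings). The \(w\)-row reads \(\rho\mu_\xi\,\delta w+\xi(\boldsymbol t)\,\delta p=\rho\mu_\xi\,\delta w\), because \(\xi(\boldsymbol t)=0\). The \(w\)-column enters the other rows only through the advection term \(w\,\xi(\boldsymbol t)\) and through the divergence term \(\rho\,\xi(\boldsymbol t)\,\delta w\) in the mass row. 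Both of these vanish.

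\textbf{Step 2: transfer to the conservative flux Jacobian.} The conservative Jacobian \(\mathcal A_\xi=D_U\mathfrak F_\xi\) is related to the primitive symbol by left and right multiplication with \(D_YU\). So take \(\mathcal R_\Sigma\) to be the inverse of \(D_YU\) (conservative to primitive), combined with the Galilean subtraction, and then compose on the right with \(D_YU\) to express everything in \(Y\). The remaining \(4\times4\) block in \(\widehat Y=(\rho,a,b,\mathfrak s)\) is precisely the Euler symbol of the two-dimensional problem in \(\Pi\), with normal-plane velocity \(aN+b\eta_s\). This gives \(\widehat{\mathcal A}_\xi\) and the entry \(\rho_i\mu_{\xi,i}\).

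\textbf{Step 3: the geometric column.} Compute \(h_\alpha=-\sjump{\mathfrak F_{\widetilde\vartheta}}_\Sigma\) with \(\widetilde\vartheta\) chosen without a \(\boldsymbol t\)-component, which is allowed since the column is extension-independent. The \(\boldsymbol t\)-momentum component of \(\mathfrak F_{\widetilde\vartheta}\) is \(m_\vartheta w+p\,\vartheta(\boldsymbol t)=m_\vartheta w\), where \(m_\vartheta\) is the mass flux through \(\vartheta\). In the subtracted frame this vanishes, since \(w=0\) on both sides. Applying \(\mathcal R_\Sigma\) to the jump vector therefore yields \((\widehat h_\alpha,0)^{\mathsf T}\). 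Without the subtraction the component would be \(\sjump{m_\vartheta}_\Sigma w_\Sigma\). The Galilean change removes it exactly, and it is cancelled row-wise by subtracting \(w_\Sigma\) times the mass row.

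\textbf{Main obstacle.} The delicate point is making one fixed matrix \(\mathcal R_\Sigma\) serve both traces \(i=1,2\) at once, and also the column \(h_\alpha\), which is a jump across the shock rather than a one-sided Jacobian. My first attempt would be to define \(\mathcal R_\Sigma\) purely as the conservative row operation ``\(\boldsymbol t\)-momentum minus \(w_\Sigma\cdot\)mass''. This operation is state-independent because \(w_1=w_2=w_\Sigma\). I would then check that, in conservative rows, both the upper-right and lower-left off-diagonal blocks vanish after the corresponding column substitution. If the conservative form leaves residual couplings through energy (terms in \(w\,\delta w\)), I would instead append the energy-row correction ``energy minus \(w_\Sigma\cdot\)(\(\boldsymbol t\)-momentum)''. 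This correction is again state-independent on the shock, and it removes the kinetic-energy cross term.
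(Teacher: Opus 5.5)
Your proposal is correct and follows essentially the paper's argument. The common ingredients are a Galilean subtraction by \(w_\Sigma\boldsymbol t\) so that both traces have zero line-tangent velocity, and the vanishing of every \(\boldsymbol t\)-coupling, because all relevant covectors annihilate \(\boldsymbol t\) and the line-tangent kinetic energy is quadratic in \(w-w_\Sigma\). The geometric column then has \(\boldsymbol t\)-component \(-\sjump{m_\vartheta w}_\Sigma\), which the row operation cancels exactly since \(w_1=w_2\).

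One refinement is needed. The trace-dependent \((D_YU)^{-1}\) in your Step 2 should be dropped. Use instead the fixed Galilean row map: \(\boldsymbol t\)-momentum minus \(w_\Sigma\) times mass, and energy minus \(w_\Sigma\) times \(\boldsymbol t\)-momentum plus \(\tfrac12 w_\Sigma^2\) times mass. Your fallback omits the \(\tfrac12 w_\Sigma^2\) term, which leaves the active energy row shifted by \(-\tfrac12 w_\Sigma^2\) times the mass row. That keeps the block splitting but loses the exact identification with \(\widehat{\mathcal A}_\xi(\widehat Y_i)\).
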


\begin{proof}
In the frame moving with \(w_\Sigma\boldsymbol t\), the relative
line-tangent kinetic energy is quadratic in \(w-w_\Sigma\).  Its first
derivative vanishes at the two base traces, while the line-tangent momentum
equation linearises to the scalar factor \(\rho_i\mu_{\xi,i}\).  The geometric
column has no line-tangent component because the interaction conormals
annihilate \((0,\boldsymbol t)\).  The componentwise calculation is recorded
in the supplementary material.
\end{proof}

The block decomposition yields the following gas-dynamic selection rule.

\begin{corollary}[Active--passive selection]
\label{cor:active-passive-selection}
Whenever the prescribed active and passive boundary blocks are invertible,
(i) acoustic, entropy and in-plane vortical incidence cannot generate an
outgoing \(\boldsymbol t\)-polarised shear mode; (ii) a purely
\(\boldsymbol t\)-polarised vortical trace produces no active acoustic,
entropy or in-plane shear response and gives \(\alpha=0\); and (iii) all
active amplitudes, including the curvature or acceleration response, are
independent of the common drift \(w_\Sigma\).
\end{corollary}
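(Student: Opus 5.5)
The plan is to apply the block decomposition of Theorem~\ref{thm:spacetime-four-plus-one} directly to the boundary system \eqref{eq:local-euler-boundary-system}, after multiplying it on the left by \(\mathcal R_\Sigma\).

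\textbf{Step 1: splitting the boundary system.} Each column of \(B_{\rm out}\) and the vector \(b_{\rm in}\) are residuals of the form \(\mathcal A_\nu(Y_2)\delta q_2-\mathcal A_\nu(Y_1)\delta q_1\), by \eqref{eq:local-differentiated-rh}. The first task is to classify the outgoing and incident modes by the block structure of \eqref{eq:spacetime-euler-block-symbol}.

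The characteristic covectors of the block-diagonal symbol split into two families. The active family consists of the acoustic, entropy and in-plane shear modes. Their right eigenvectors, after the Galilean shift, have zero \(w\)-component. The passive family is the single \(\boldsymbol t\)-polarised shear mode, with \(\mu_\xi=0\). Its eigenvector is supported only in the \(w\)-slot. I will make this precise from the zero off-diagonal blocks: an eigenvector of a block-diagonal matrix can be chosen supported in one block whenever the eigenvalues of the two blocks are distinct.

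The one degenerate case is that the entropy and in-plane shear modes share \(\mu_\xi=0\) with the passive mode. There I will choose the basis adapted to the blocks, as the constant-multiplicity assumption permits. Then \(\mathcal R_\Sigma B_{\rm out}\) is block diagonal, \(\mathcal R_\Sigma h_\alpha=(\widehat h_\alpha,0)^{\mathsf T}\), and \(\mathcal R_\Sigma b_{\rm in}\) lies entirely in one block according to the type of the incident mode.

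\textbf{Step 2: reading off (i)--(iii).} The system decouples into
\[
  \widehat B\,\widehat a_{\rm out}+\widehat h_\alpha\alpha=\widehat b_{\rm in},
  \qquad
  \beta\,a_{t}=b^{t}_{\rm in},
\]
where \(\beta\) is the passive boundary coefficient, built from \(\rho_i\mu_{\nu,i}\) and nonzero because the mass flux is nonzero.

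For (i), active incidence gives \(b^{t}_{\rm in}=0\). Invertibility of the passive block then forces \(a_t=0\).

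For (ii), \(\boldsymbol t\)-polarised incidence gives \(\widehat b_{\rm in}=0\). The active block \((\widehat B\;\widehat h_\alpha)\) is assumed invertible, so \(\widehat a_{\rm out}=0\) and \(\alpha=0\). Equivalently, in \eqref{eq:spacetime-scalar-law} one has \(\psi_{\rm loc}^{\mathsf T}b_{\rm in}=0\), with \(\psi_{\rm loc}\) taken of the form \(\mathcal R_\Sigma^{\mathsf T}(\widehat\psi,0)\).

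For (iii), I note that \(\widehat{\mathcal A}_\xi\) depends only on \(\widehat Y=(\rho,a,b,\mathfrak s)\) and on \(\mu_\xi\), which does not contain \(w\). Likewise \(\widehat h_\alpha\) is built from the normal-plane flux jump, where the drift enters only through the kinetic energy term \(\tfrac12 w^2\). Because total enthalpy and energy fluxes are invariant under the inertial subtraction, \(w_\Sigma\) drops out of \(\widehat B\), \(\widehat h_\alpha\) and \(\widehat b_{\rm in}\). The drift is common to both sides by \eqref{eq:common-tangential-velocity-compact}, and \(\mathcal R_\Sigma\) absorbs its coupling into the energy row. Hence the active amplitudes and \(\alpha\) are independent of \(w_\Sigma\).

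\textbf{Main obstacle.} The delicate point is (iii). I must check that the transformed incident column \(\widehat b_{\rm in}\) and the normalisation of the incident trace do not carry hidden \(w_\Sigma\)-dependence through \(\mathcal R_\Sigma\). The row transformation is built from \(w_\Sigma\), so I must show that its action on active vectors is \(w_\Sigma\)-free. I would verify this by writing \(\mathcal R_\Sigma\) explicitly as the Galilean energy-row correction \(E\mapsto E-w_\Sigma m_t\), with \(m_t\) the line-tangent momentum row. On active vectors the \(m_t\)-entry is \(\rho w_\Sigma\) times a density perturbation, which is exactly cancelled by the \(w_\Sigma\)-dependent part of the energy row. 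That cancellation is the content of the quadratic kinetic-energy remark in the proof of Theorem~\ref{thm:spacetime-four-plus-one}.
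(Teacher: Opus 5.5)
Your Steps 1 and 2(i)--(ii) are the argument the paper intends. The corollary is read off from Theorem~\ref{thm:spacetime-four-plus-one} after left-multiplying \eqref{eq:local-euler-boundary-system} by the invertible \(\mathcal R_\Sigma\). Your block-adapted basis at \(\mu_\xi=0\) is legitimate, because the kernel of a block-diagonal symbol is the direct sum of the block kernels.

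\textbf{The gap is in (iii).} The reason you give is false: total enthalpy and energy flux are not invariant under the inertial subtraction (at the base traces \(h_0\) shifts by \(-\tfrac12w_\Sigma^2\)). The explicit check you propose for your main obstacle would also fail.

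Suppose \(\mathcal R_\Sigma\) acts on laboratory-frame rows, as you assume. For an active perturbation (\(\delta w=0\)), the line-tangent momentum entry of \(\mathcal A_\xi\delta Y\) is \(w_\Sigma\,\delta(\rho\mu_\xi)\). This is a mass-flux perturbation, not \(\rho w_\Sigma\delta\rho\). The energy entry contains \(\tfrac12w_\Sigma^2\,\delta(\rho\mu_\xi)\).

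The single correction \(E\mapsto E-w_\Sigma m_t\) therefore leaves a residual \(-\tfrac12w_\Sigma^2\,\delta(\rho\mu_\xi)\) in the energy row. It also leaves the \(m_t\) row untouched. So active vectors would still feed the passive row, and so would the geometric column, whose \(m_t\) entry is \(-w_\Sigma\sjump{\rho\mu_{\widetilde\vartheta}}_\Sigma\neq0\). That contradicts the block form you rely on in Step 1. The row map compatible with the theorem is the full Galilean one, \(m_t\mapsto m_t-w_\Sigma m_0\) and \(E\mapsto E-w_\Sigma m_t+\tfrac12w_\Sigma^2m_0\), with \(m_0\) the mass row.

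More simply, you need not track \(\mathcal R_\Sigma\) at all. A constant boost by \(w_\Sigma\boldsymbol t\) only shifts \(w\) by a constant, so the derivative traces \(\delta q_i\) are unchanged. Because the interaction conormals annihilate \((0,\boldsymbol t)\), the boost also leaves every relevant covector, and hence every \(\mu_\xi\), unchanged. It transforms the equations by an invertible row map and yields a problem with \(w\equiv0\) on both traces.

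In that problem the active equations involve only \(\widehat{\mathcal A}_\nu(\widehat Y_i)\), the roots and kernels of \(\widehat{\mathcal A}_\zeta(\widehat Y_i)\), the active incident trace and \(\widehat h_\alpha\). None of these contains \(w_\Sigma\). That is the content of (iii), and it removes your obstacle.
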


The reduction is pointwise: the frame, the states and \(w_\Sigma\) may vary
along \(\mathcal C\), but their smooth longitudinal derivatives do not enter
the frozen first-jet jump matrix.  The theorem therefore does not reduce the
physical shock geometry to two dimensions.  It says only that the local gas
dynamics selects the scalar \(\alpha\) in the normal plane.  The orientation
of \((\eta_s,\boldsymbol t)\) and the pre-existing shape operator then convert
that scalar into the genuinely three-dimensional curvature and post-shock
flow responses studied below.

\section{Perfect-gas response coefficients}
\label{sec:steady-euler-curvature-transfer}

Section~\ref{sec:local-euler-junction-law} reduces the local interaction to one
scalar amplitude, \(\kappa=K_{\rm loc}a_{\rm in}\), selected by the active
Euler problem in the plane
\(\Pi=\operatorname{span}\{N,\eta_s\}\).  We now evaluate that coefficient
for a steady event, or equivalently for a moving event viewed in the
instantaneous shock--intersection frame.  Write
\[
  \bm u_{i\Pi}=a_iN+b\eta_s,
  \qquad
  V_{\Pi i}=\sqrt{a_i^2+b^2},
  \qquad
  M_{\Pi i}=V_{\Pi i}/c_i,
\]
where \(a_i\) and \(b\) are relative velocities when the shock or the
intersection curve moves.  The scalar closure used below applies in the
swept-shock sector
\begin{equation}
  a_1>c_1,\qquad 0<a_2<c_2,\qquad V_{\Pi2}>c_2.
  \label{eq:steady-euler-mode-regime}
\end{equation}
The normal component is subsonic downstream, but the tangential sweep makes
the projected downstream flow supersonic.  Entropy and the two shear modes are
then convected away from the shock, and exactly one downstream acoustic mode
is outgoing.  This leaves one compatibility condition for \(\kappa\).

\paragraph{Outgoing acoustic compatibility and the shock-polar tangent.}
Let \(\theta\) denote the velocity angle in one fixed laboratory basis and
set \(Q=\sqrt{M_\Pi^2-1}\).  The two steady acoustic compatibility covectors
may be written
\begin{equation}
  \omega_\tau=d\theta
  +\tau\frac{Q}{\rho V_\Pi^2}\,dp,
  \qquad \tau=\pm1.
  \label{eq:steady-euler-acoustic-covectors}
\end{equation}
For the orientation used here, the outgoing branch is
\(\chi=\operatorname{sgn}b\), so that the four-dimensional outgoing state
space is \(\ker\omega_\chi\).  A left annihilator of the outgoing boundary
images is therefore
\begin{equation}
  \psi_{\rm loc}^{\mathsf T}
  =\omega_\chi\mathcal A_N(Y_2)^{-1}.
  \label{eq:steady-euler-boundary-cokernel}
\end{equation}
The mode-count proof and the construction of
\eqref{eq:steady-euler-acoustic-covectors} are given in the supplementary
material.

Hold the physical upstream state fixed and rotate the shock normal through a
signed angle \(\sigma\), with \(N_{,\sigma}(0)=\eta_s\).  The corresponding
Rankine--Hugoniot branch is \(Y_2(\sigma)\), and
\begin{equation}
  \mathcal A_N(Y_2)Y_{2,\sigma}=h_\kappa.
  \label{eq:steady-euler-branch-column}
\end{equation}
Thus the geometric sensitivity column is the normal-flux image of the tangent
to the finite-strength shock branch.  Its projection to
\((\Theta,J)\), where \(\Theta\) is the flow-deflection angle and
\(J=p_2/p_1\), is the tangent to the conventional pressure--deflection shock
polar.  All velocity angles are measured in the fixed laboratory basis, not
in a basis co-rotating with the normal.  For the reference state
\(\gamma=7/5\), \(M_{n1}^2=5\) and \(B^2=10/3\), this convention gives the
useful check \(\theta_{2,\sigma}=4/7\).

\begin{proposition}[Shock-polar response coefficient]
\label{prop:steady-euler-curvature-coefficient}
Define the static transversality pairing
\begin{equation}
  \Delta_{\rm loc}
  :=\psi_{\rm loc}^{\mathsf T}h_\kappa
  =\omega_\chi(Y_{2,\sigma}).
  \label{eq:steady-euler-shock-branch-pairing}
\end{equation}
For a normalized upstream incident trace
\(Q_{\rm in}=a_{\rm in}q_{\rm in}\),
\begin{equation}
  \boxed{
  \kappa=K_{\rm loc}a_{\rm in},\qquad
  K_{\rm loc}=
  \frac{\omega_\chi\!\left(
  \mathcal A_N(Y_2)^{-1}\mathcal A_N(Y_1)q_{\rm in}\right)}
       {\omega_\chi(Y_{2,\sigma})}.}
  \label{eq:steady-euler-upstream-curvature-law}
\end{equation}
For a weak sheet overtaking the shock from state 2,
\begin{equation}
  \boxed{
  \kappa=-\frac{\omega_\chi(q_{\rm in})}
                 {\omega_\chi(Y_{2,\sigma})}\,a_{\rm in}.}
  \label{eq:steady-euler-overtaking-curvature-law}
\end{equation}
\end{proposition}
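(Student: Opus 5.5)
The plan is to specialise the general scalar law \eqref{eq:spacetime-scalar-law} to the steady swept-shock sector \eqref{eq:steady-euler-mode-regime}, using the explicit annihilator \eqref{eq:steady-euler-boundary-cokernel} and the branch identity \eqref{eq:steady-euler-branch-column}. The differentiated Rankine--Hugoniot relation \eqref{eq:local-differentiated-rh} for a steady event with unit spatial normal reads
\[
\mathcal A_N(Y_2)\delta q_2-\mathcal A_N(Y_1)\delta q_1+h_\kappa\kappa=0,
\]
with \(\alpha=\kappa\). This relation holds in the active block of Theorem~\ref{thm:spacetime-four-plus-one}, where \(\mathcal A_N(Y_2)\) is invertible because \(0<a_2<c_2\) keeps it away from the sonic and convective zeros. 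I would first check that \(\psi_{\rm loc}^{\mathsf T}=\omega_\chi\mathcal A_N(Y_2)^{-1}\) really annihilates every outgoing column of \(B_{\rm out}\). In the sector \eqref{eq:steady-euler-mode-regime} there is no outgoing upstream mode, since \(a_1>c_1\) makes every state-1 mode incoming. Each outgoing column therefore has the form \(\mathcal A_N(Y_2)r\), with \(r\) a downstream outgoing trace. By the mode count quoted from the supplementary material, these traces span exactly \(\ker\omega_\chi\): the entropy mode, the in-plane shear mode and the single outgoing acoustic mode, which together give the four-dimensional outgoing space. Hence \(\psi_{\rm loc}^{\mathsf T}\mathcal A_N(Y_2)r=\omega_\chi(r)=0\), and since \(\omega_\chi\neq0\) the left nullspace is the required one-dimensional one.

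The denominator follows directly from \eqref{eq:steady-euler-branch-column}:
\[
\psi_{\rm loc}^{\mathsf T}h_\kappa=\omega_\chi\bigl(\mathcal A_N(Y_2)^{-1}\mathcal A_N(Y_2)Y_{2,\sigma}\bigr)=\omega_\chi(Y_{2,\sigma}),
\]
which is \eqref{eq:steady-euler-shock-branch-pairing}. The step I expect to be the main obstacle is justifying \eqref{eq:steady-euler-branch-column} itself, including its sign. To do this, I would differentiate \(\sjump{\mathfrak F_{N(\sigma)}}_\Sigma=0\) along the one-parameter rotation with \(U_1\) fixed, which gives \(\mathcal A_N(Y_2)Y_{2,\sigma}+\sjump{\mathfrak F_{\eta_s}}=0\). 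Then, with \(N_{,\sigma}=\eta_s=\widetilde\vartheta\), the definition \(h_\alpha=-\sjump{\mathfrak F_{\widetilde\vartheta}}_\Sigma\) yields \(\mathcal A_N(Y_2)Y_{2,\sigma}=h_\kappa\). Here I must check that the orientation of \(\vartheta\) (from \(\Sigma^-\) to \(\Sigma^+\)) matches the sign convention of \(\sigma\), and that all angles are measured in the fixed laboratory basis so that no co-rotation term enters. Nonvanishing of \(\Delta_{\rm loc}\) is taken as the transversality hypothesis.

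For the numerator in the upstream case, the jump in state 1 is purely the incident trace, \(\delta q_1=a_{\rm in}q_{\rm in}\). The incident image is therefore \(b_{\rm in}a_{\rm in}=\mathcal A_N(Y_1)q_{\rm in}a_{\rm in}\). Applying \(\psi_{\rm loc}^{\mathsf T}\) and dividing by \(\Delta_{\rm loc}\) gives \eqref{eq:steady-euler-upstream-curvature-law}. For the overtaking case the incident trace lives in state 2, and \(\delta q_1=0\) because the upstream state is supersonic and unaffected. The relation becomes
\[
\mathcal A_N(Y_2)(a_{\rm in}q_{\rm in}+r_{\rm out})+h_\kappa\kappa=0 .
\]
Applying \(\omega_\chi\mathcal A_N(Y_2)^{-1}\) kills \(r_{\rm out}\in\ker\omega_\chi\) and leaves \(\omega_\chi(q_{\rm in})a_{\rm in}+\omega_\chi(Y_{2,\sigma})\kappa=0\), which is \eqref{eq:steady-euler-overtaking-curvature-law}. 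The sign difference between the two formulas comes purely from which side of \eqref{eq:local-euler-boundary-system} the incident term occupies, so I would finally check consistency against the reference value \(\theta_{2,\sigma}=4/7\).
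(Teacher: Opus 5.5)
Your proposal is correct and follows the route the paper's setup implies. You apply the annihilator \eqref{eq:steady-euler-boundary-cokernel} to the boundary system, take the denominator from the branch identity \eqref{eq:steady-euler-branch-column} (with the sign correctly derived from \(h_\alpha=-\sjump{\mathfrak F_{\widetilde\vartheta}}_\Sigma\) and \(N_{,\sigma}=\eta_s\)), and read off the incident image on the appropriate side of \eqref{eq:local-differentiated-rh} in the upstream and overtaking cases.

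One small slip concerns the mode count. The four-dimensional space \(\ker\omega_\chi\) of the five-component system also contains the \(\boldsymbol t\)-polarised shear trace, which \(\omega_\chi\) annihilates because it involves only \(\theta\) and \(p\); you listed only three modes. It is this count of four independent outgoing columns \(\mathcal A_N(Y_2)r\) that makes the left nullspace one-dimensional, not \(\omega_\chi\neq0\), which only guarantees \(\psi_{\rm loc}\neq0\).
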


The denominator compares the shock-polar tangent with the outgoing acoustic
compatibility hyperplane:
\begin{equation}
  \Delta_{\rm loc}
  =\theta_{2,\sigma}
   +\chi\frac{\sqrt{M_{\Pi2}^2-1}}
                  {\rho_2V_{\Pi2}^2}\,p_{2,\sigma}.
  \label{eq:steady-euler-delta-polar-form}
\end{equation}
Equivalently, the augmented frozen boundary determinant equals
\begin{equation}
  \mathcal D_{\rm stat}
  =C_{\rm norm}\,\Delta_{\rm loc},
  \qquad C_{\rm norm}\neq0,
  \label{eq:static-lopatinski-identity}
\end{equation}
where \(C_{\rm norm}\) contains the modal basis, incidence factors and volume
form.  Hence \(\Delta_{\rm loc}\neq0\) is the static evolutionary
transversality condition associated with this junction problem.  Abstract
non-vanishing belongs to the classical Lopatinski theory of perfect-gas
shocks \citep{Majda1983,Metivier2001}; the present formulation identifies the
specific pairing entering curvature deposition and evaluates it directly on
the swept shock polar.  No frequency-dependent stability statement is made.

For a calorically perfect gas, introduce
\[
  M_{n1}=a_1/c_1,\qquad B=|b|/c_1,
  \qquad
  R_\gamma(x)=\frac{(x-1)((\gamma-1)x+2)}{(\gamma+1)x}.
\]
The projected-supersonic condition is
\begin{equation}
  B>B_{\rm crit}(M_{n1}),
  \qquad
  B_{\rm crit}(M_{n1})=\sqrt{R_\gamma(M_{n1}^2)}.
  \label{eq:steady-euler-sweep-threshold}
\end{equation}
Exact differentiation of the normal-shock relations shows that both terms in
\eqref{eq:steady-euler-delta-polar-form} have the same sign.  Thus
\begin{equation}
  \Delta_{\rm loc}>0
  \quad\hbox{throughout \eqref{eq:steady-euler-mode-regime}},
  \label{eq:steady-euler-perfect-gas-nonvanishing}
\end{equation}
and, for each fixed finite-strength shock,
\begin{equation}
  \boxed{
  \lim_{M_{\Pi2}\downarrow1}\Delta_{\rm loc}
  =\frac{2c_1^2R_\gamma(M_{n1}^2)}
         {(\gamma+1)M_{n1}^2c_2^2}>0.}
  \label{eq:steady-euler-sonic-limit}
\end{equation}
The scalar regime therefore ends because the two stationary acoustic
directions coalesce, not because the static pairing vanishes.  The limit is
not uniform in the simultaneous weak-shock limit
\(M_{n1}\downarrow1\).  The explicit algebra and the relation to the full
augmented determinant are recorded in the supplementary material.

\paragraph{Acoustic incidence and curvature-neutral scattering.}
Figure~\ref{fig:euler-response-atlas} maps
\(K_{\rm loc}(M_{n1},B)\) for the two incident acoustic Mach-line families,
using \(\gamma=1.4\), \(\rho_1=c_1=1\) and the normalization
\(\delta p_{\rm in}=1\), with the incidence factor absorbed into
\(a_{\rm in}\).  The two panels use separate colour scales because their
ranges differ substantially.  The grey sector has \(M_{\Pi2}\le1\) and lies
outside the real stationary acoustic closure.

\begin{figure}
  \centering
  \includegraphics[width=\textwidth]{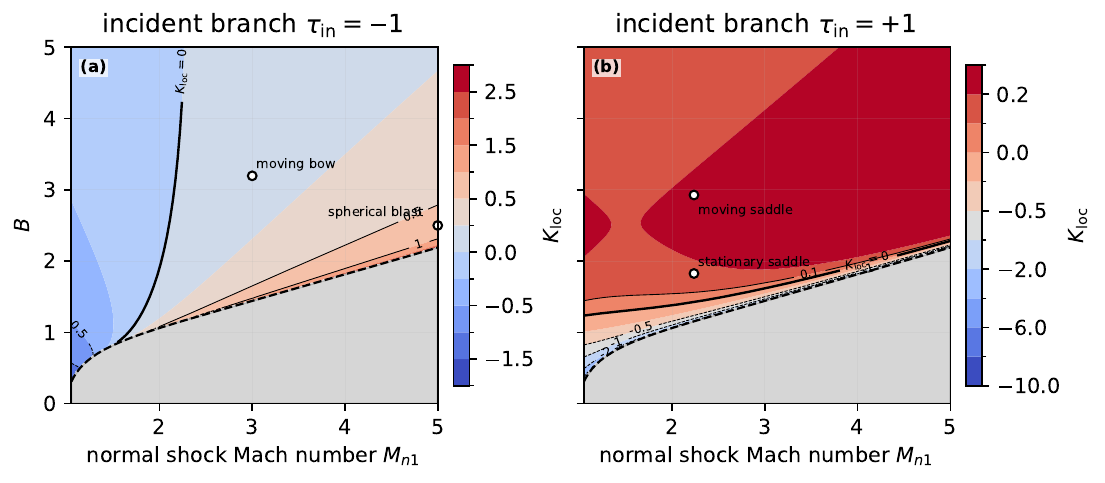}
  \caption{Perfect-gas acoustic curvature coefficient for the two incident
  Mach-line families: (a) \(\tau_{\rm in}=-1\) and
  (b) \(\tau_{\rm in}=+1\).  The vertical coordinate is
  \(B=|\bar b|/c_1\); the dashed curve is \(M_{\Pi2}=1\).  Separate colour
  scales are used for the two branches.  Solid zero contours are
  curvature-neutral incidences: outgoing waves remain non-zero, but the
  geometric channel cancels.  Open circles mark the reference states used
  later.}
  \label{fig:euler-response-atlas}
\end{figure}

Because \(\Delta_{\rm loc}>0\), an acoustic junction is curvature-neutral
precisely when the numerator in
\eqref{eq:steady-euler-upstream-curvature-law} vanishes.  The transmitted
trace then already belongs to the outgoing modal hyperplane, so the boundary
conditions are satisfied by outgoing waves alone and \(\kappa=0\).  This is
neither a sonic singularity nor absence of scattering.  At
\(M_{n1}=2\), for example, neutral points occur at
\(B=1.899770\) and \(B=1.379314\) on the two branches; the outgoing acoustic,
entropy and in-plane shear amplitudes are all non-zero.  The complete modal
values are listed in the supplementary material.

\paragraph{Convected entropy and vortical incidence.}
In the eventwise relative frame, entropy and vortical sheets share the
convective conormal
\[
  \widehat\zeta_{\rm c}
  =\frac{-\bar bN+\bar a_1\eta_s}{V_{\Pi1}}.
\]
With primitive variables \(Y=(\rho,a,b,p,w)^{\mathsf T}\), use
\begin{equation}
\begin{aligned}
  q_{\rm ent}&=(-\rho_1,0,0,0,0)^{\mathsf T},
  &&\delta\mathfrak s/c_p=1,\\
  q_{\rm vort}&=\left(0,c_1\frac{\bar a_1}{V_{\Pi1}},
      c_1\frac{\bar b}{V_{\Pi1}},0,0\right)^{\mathsf T},
  &&|\delta\bm u_\Pi|/c_1=1,\\
  q_t&=(0,0,0,0,c_1)^{\mathsf T}.
\end{aligned}
  \label{eq:nonacoustic-polarizations}
\end{equation}
The associated scalar amplitudes represent entropy-gradient and
velocity-gradient traces, not finite jumps of the continuous state.

\begin{proposition}[Total-enthalpy identity for convected incidence]
\label{prop:nonacoustic-enthalpy-identity}
For the normalizations in \eqref{eq:nonacoustic-polarizations},
\begin{equation}
  \boxed{
  K_{\rm vort}=-\frac{2}{M_{\Pi1}}K_{\rm ent},
  \qquad K_t=0.}
  \label{eq:nonacoustic-enthalpy-identity}
\end{equation}
\end{proposition}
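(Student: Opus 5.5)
The plan is to work directly from the closed form \eqref{eq:steady-euler-upstream-curvature-law} of Proposition~\ref{prop:steady-euler-curvature-coefficient}. The denominator \(\omega_\chi(Y_{2,\sigma})=\Delta_{\rm loc}\) is the same for every incident mode and is non-zero by \eqref{eq:steady-euler-perfect-gas-nonvanishing}. So both claims reduce to statements about the numerator
\[
  \mathcal N(q)=\omega_\chi\!\left(\mathcal A_N(Y_2)^{-1}\mathcal A_N(Y_1)q\right),
\]
which is linear in the normalised trace \(q\).

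\emph{Passive part, \(K_t=0\).} This follows from Theorem~\ref{thm:spacetime-four-plus-one} and part (ii) of Corollary~\ref{cor:active-passive-selection}. After the Galilean shift by \(w_\Sigma\boldsymbol t\) and the row transformation \(\mathcal R_\Sigma\), both \(\mathcal A_N(Y_i)\) are block diagonal with the line-tangent row \(\rho_i\mu_{N,i}\). Hence \(\mathcal A_N(Y_2)^{-1}\mathcal A_N(Y_1)q_t\) is again purely \(w\)-polarised. The acoustic covector \(\omega_\chi=d\theta+\chi Q\,dp/(\rho V_\Pi^2)\) involves only the in-plane angle and the pressure, so it annihilates this vector. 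Therefore \(\mathcal N(q_t)=0\).

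\emph{Active part.} I will exhibit a combination
\[
  q_\ast=q_{\rm vort}+\lambda\,q_{\rm ent}
\]
whose transmitted image \(\mathcal A_N(Y_2)^{-1}\mathcal A_N(Y_1)q_\ast\) lies in the span of the downstream convective (entropy and in-plane shear) modes. Those modes lie in \(\ker\omega_\chi\): the outgoing state space is \(\ker\omega_\chi\), and it contains the convected modes in the regime \eqref{eq:steady-euler-mode-regime}. This gives \(\mathcal N(q_\ast)=0\), i.e. \(K_{\rm vort}=-\lambda K_{\rm ent}\), and it remains to show \(\lambda=2/M_{\Pi1}\).

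\emph{Building \(q_\ast\).} The upstream images are flux perturbations \(\delta(\rho a)\), \(\delta(\rho a^2+p)\), \(\delta(\rho a b)\) and \(\delta(\rho a H)\), where \(H=c_pT+\tfrac12|\bm u|^2\) is the total enthalpy. I will choose \(\lambda\) so that the upstream perturbation has \(\delta H=0\), together with the pressure and normal-momentum structure of a convected disturbance. Since \(\delta p=0\) for both traces, the energy row then reduces to \(H\,\delta(\rho a)\). On the downstream side a steady entropy/shear mode also has \(\delta p=0\) and \(\delta H=0\). Hence the energy row is automatically compatible with the mass row, and the momentum rows can be matched by the downstream entropy and in-plane shear amplitudes, leaving nothing in the acoustic direction. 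The algebra I expect is:
\begin{itemize}
  \item entropy trace at fixed \(p\): \(\delta T/T=1\), so \(c_p\,\delta T=c_1^2/(\gamma-1)\);
  \item vortical trace: \(\bm u_\Pi\cdot\delta\bm u=c_1V_{\Pi1}\).
\end{itemize}
I will then check, in the reduced four-variable system (using \(\widehat\zeta_{\rm c}\) and the mass-flux normalisation), that the coefficient making the projected system consistent is \(2/M_{\Pi1}\).

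\emph{Main obstacle.} The hard step is pinning down exactly which invariant combination is annihilated. A naive \(\delta H=0\) balance gives a coefficient involving \(\gamma-1\), not \(2/M_{\Pi1}\). So I will first write \(\mathcal A_N(Y_1)q_{\rm ent}\) and \(\mathcal A_N(Y_1)q_{\rm vort}\) explicitly in conservative variables. Using \(\sjump{\rho a}=0\) and \(\sjump{H}=0\), I will then express the transmitted image through the downstream basis \(\{\)entropy, in-plane shear, outgoing acoustic\(\}\). The identity should emerge by reading off the acoustic coefficient, since only it contributes to \(\omega_\chi\), and observing that it is proportional to \(2\,\delta(\rho a)/\rho_1+M_{\Pi1}\,\delta a_{\rm rel}/c_1\)-type data. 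From that, the ratio \(-2/M_{\Pi1}\) follows. As a cross-check, I will evaluate both coefficients at the reference state \(\gamma=7/5\), \(M_{n1}^2=5\), \(B^2=10/3\).
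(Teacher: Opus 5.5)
Your reduction is sound and is the skeleton of the paper's proof. The denominator \(\Delta_{\rm loc}\) is common to all incident modes and the numerator is linear in the trace. \(K_t=0\) follows from the block structure of Theorem~\ref{thm:spacetime-four-plus-one}. The identity is then equivalent to finding \(q_\ast=q_{\rm vort}+\lambda q_{\rm ent}\) whose transmitted image is convective, and hence annihilated by \(\omega_\chi\).

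The gap is the construction of \(q_\ast\), which is the whole content of the identity. The constraint you propose, \(\delta H=0\), is the wrong invariant. The claim behind it, that downstream entropy and shear modes carry \(\delta p=0\) and \(\delta H=0\), is false. Steady flow has \(\bm u\cdot\nabla H=0\), so across a weak sheet with conormal \(\zeta\) one gets \((\bm u\cdot\zeta)\,\delta H=0\). This forces \(\delta H=0\) on \emph{acoustic} traces, whereas convected traces (\(\bm u\cdot\zeta=0\)) carry \(\delta H\) freely. Your own values, \(\delta H=c_1^2/(\gamma-1)\) for \(q_{\rm ent}\) and \(c_1\bar V_{\Pi1}\) for \(q_{\rm vort}\), already show this. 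So the energy row is not automatically slaved to the mass row, and adding \(\delta H=0\) to the four flux rows over-determines the problem. That is why you obtain \(\lambda=-(\gamma-1)M_{\Pi1}\). Your fallback, expanding in the downstream basis and reading off a proportionality to a guessed combination of trace data, is an uncompleted computation rather than an argument. The step that produces \(2/M_{\Pi1}\) is therefore missing.

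The missing idea is to cancel the \emph{momentum} rows, not \(\delta H\). Take \(\frac{\bar V_{\Pi1}}{c_1}q_{\rm vort}+2q_{\rm ent}\), i.e.\ \((\delta\rho,\delta a,\delta b,\delta p)=(-2\rho_1,\bar a_1,\bar b,0)\). This is the generator of \((\rho,\bm u_\Pi,p)\mapsto(\lambda^{-2}\rho,\lambda\bm u_\Pi,p)\), which leaves \(\rho\bm u\otimes\bm u+pI\) unchanged.

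\begin{itemize}
\item Its normal-flux image is \(\rho_1\bar a_1(-1,0,0,h_0)^{\mathsf T}\). Both momentum rows vanish and the mass and energy rows have opposite signs. So this combination has \(\delta H=2h_0\neq0\), and no \(\delta H=0\) ansatz can reach it.
\item Solving \(\widehat{\mathcal A}_N(\widehat Y_2)X=(-1,0,0,h_0)^{\mathsf T}\) gives \(X=(-2/\bar a_2,1/\rho_2,\bar b/(\rho_2\bar a_2),0)^{\mathsf T}\), the same generator at state 2. The mass and momentum rows check directly, and the energy row closes exactly because \(H_2=h_0\).
\item That is where total-enthalpy conservation enters: as continuity of the base value across the shock, not as a vanishing perturbation.
\item Since \(X\) has \(\delta p=0\) and \(\delta\bm u_\Pi\parallel\bm u_{\Pi2}\), it is an entropy--shear combination with \(\omega_\chi(X)=0\). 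This gives \(\lambda=2c_1/\bar V_{\Pi1}=2/M_{\Pi1}\).
\end{itemize}

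This is precisely the paper's \(u-2v\) computation.
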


\begin{proof}
Let \(\bar V_{\Pi1}=c_1M_{\Pi1}\) and let \(h_0\) be the total enthalpy per
unit mass, which is continuous through an adiabatic Euler shock.  In the
active variables \((\rho,a,b,p)\), define
\[
 v=(1,\bar a_1,\bar b,\bar V_{\Pi1}^2/2)^{\mathsf T},\qquad
 u=(1,2\bar a_1,2\bar b,h_0+\bar V_{\Pi1}^2)^{\mathsf T}.
\]
The upstream normal-flux images satisfy
\[
 \widehat{\mathcal A}_N\widehat q_{\rm ent}
   =-\rho_1\bar a_1v,
 \qquad
 \widehat{\mathcal A}_N\widehat q_{\rm vort}
   =c_1\rho_1\bar a_1u/\bar V_{\Pi1}.
\]
Now \(u-2v=(-1,0,0,h_0)^{\mathsf T}\).  Total-enthalpy continuity gives
\[
 \widehat{\mathcal A}_N(\widehat Y_2)^{-1}(u-2v)
 =\left(-\frac2{\bar a_2},\frac1{\rho_2},
   \frac{\bar b}{\rho_2\bar a_2},0\right)^{\mathsf T},
\]
and \(\omega_\chi\) annihilates this convective state.  Hence the acoustic
projections of \(u\) and \(2v\) coincide, which yields
\(K_{\rm vort}/K_{\rm ent}=-2c_1/\bar V_{\Pi1}\).  The result
\(K_t=0\) is the passive-channel selection rule of
Section~\ref{sec:local-euler-junction-law}.
\end{proof}

For a perfect gas, set
\(\widehat V_{\Pi2}=V_{\Pi2}/c_1\).  The coefficients may be written compactly as
\begin{equation}
\begin{aligned}
 K_{\rm ent}
 &=-\frac{\chi}{\Delta_{\rm loc}}
 \frac{2}{(\gamma+1)\widehat V_{\Pi2}^{\,2}}
 \left[
 \frac{B}{\sqrt{x}}+
 \frac{(\gamma-1)x+2}{\gamma+1}
 \sqrt{M_{\Pi2}^2-1}
 \right],\\
 K_{\rm vort}&=-\frac{2}{M_{\Pi1}}K_{\rm ent},
 \qquad K_t=0,
 \qquad x=M_{n1}^2.
\end{aligned}
  \label{eq:entropy-forcing-numerator}
\end{equation}
Thus, for \(\bar b>0\),
\(K_{\rm ent}<0<K_{\rm vort}\) throughout the projected-supersonic sector:
positive entropy and in-plane vortical traces bend the shock in opposite
senses, whereas line-tangent vorticity is curvature-transparent.

\begin{figure}
  \centering
  \includegraphics[width=\textwidth]{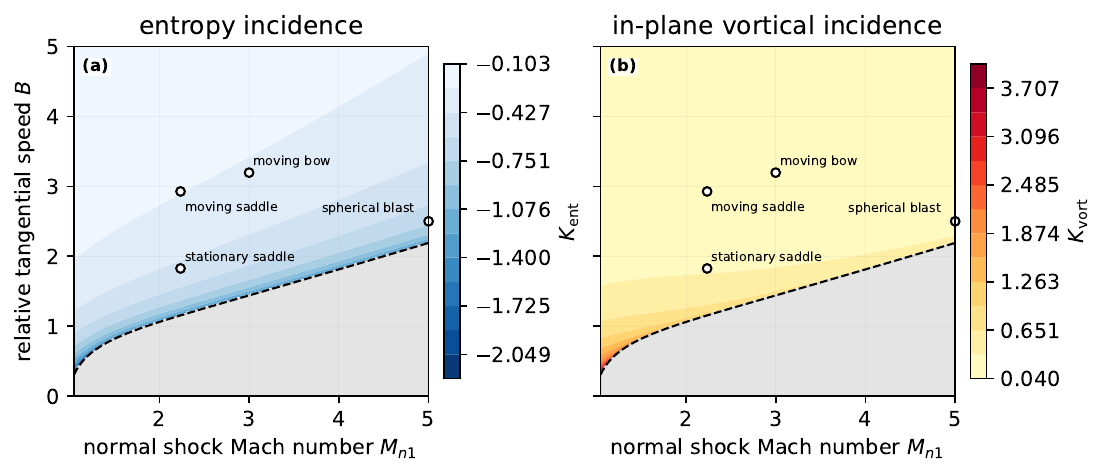}
  \caption{Perfect-gas curvature coefficients for (a) entropy and
  (b) in-plane vortical incidence, in the normalizations of
  \eqref{eq:nonacoustic-polarizations}.  Each panel has its own colour scale;
  the dashed curve is \(M_{\Pi2}=1\).  The pointwise identity
  \(K_{\rm vort}=-2K_{\rm ent}/M_{\Pi1}\) holds throughout the displayed
  sector.}
  \label{fig:nonacoustic-response-atlas}
\end{figure}

Both convected incidences generate outgoing acoustic, entropy and in-plane
shear waves while depositing non-zero curvature.  For the expanding spherical
reference state used later,
\[
  K_{\rm ent}=-\frac57,
  \qquad
  K_{\rm vort}=\frac{4}{7\sqrt5},
  \qquad K_t=0.
\]
Detailed modal amplitudes, reference-state tables and the full algebra behind
the two response maps are provided in the supplementary material.  The same
material also gives an independent planar normalization check.  For the
counter-propagating rarefaction-wave geometry, the present coefficient reduces
to the differential shock-shape relation of
\citet{MeshkovOmelchenkoUskov2002}.  Differentiation of the exact shock polar
and direct solution of the complete differentiated Rankine--Hugoniot/modal
system agree over 285 finite-strength states, with a maximum relative
discrepancy of \(2.73\times10^{-15}\).  This verifies the sign and physical
normalization used below without adding a separate planar derivation to the
main paper.  The comparison targets a first-jet identity between one-sided
derivatives at a single interaction event.  A grid calculation would recover
the same derivatives through finite stencils and would mix the junction law
with shock-location and truncation errors, whereas differentiation of the
exact Rankine--Hugoniot shock map tests the local identity directly.  A global
computation would address the complementary question of how the local
junction continues within a complete flow field.

The next section restores the full three-dimensional shock surface and shows
how these mode-dependent scalar coefficients become different intrinsic and
downstream flow responses when the interaction curve is embedded in different
ways.

\section{Three-dimensional reconstruction and downstream flow}
\label{sec:three-dimensional-inlet-response}
\label{subsec:physical-incident-normalization}

Section~\ref{sec:steady-euler-curvature-transfer} determines one scalar response,
\begin{equation}
  \kappa=K_{\rm loc}a_{\rm in}.
  \label{eq:inlet-local-scalar-response}
\end{equation}
The active scattering problem is confined to the plane normal to the
interaction curve, but its realization on the physical shock is not.  The
same scalar must be inserted into the full shape operator of the
three-dimensional front.  This is also the quantity needed by smooth
curved-shock reconstruction, in which the one-sided shock geometry enters the
differentiated Rankine--Hugoniot and Euler relations for post-shock pressure,
velocity and vorticity gradients
\citep{Molder2016,Emanuel2018,EmanuelMolder2022,ZhangEtAl2025}.

Let \(\boldsymbol{t}\) be the unit tangent to the instantaneous interaction
curve and let \(\eta_s\) be the orthogonal tangent direction in the shock
surface.  In the Darboux basis \((\boldsymbol{t},\eta_s)\), write
\begin{equation}
  S^-=
  \begin{pmatrix}
    k_{\mathcal C} & \tau_{\mathcal C}\\
    \tau_{\mathcal C} & k_\eta
  \end{pmatrix},
  \qquad
  k_{\mathcal C}=\langle S^-\boldsymbol{t},\boldsymbol{t}\rangle .
  \label{eq:inlet-darboux-data}
\end{equation}
Here \(k_{\mathcal C}\) is the normal curvature of the shock along the
interaction track, \(k_\eta\) is the cross-track normal curvature, and
\(\tau_{\mathcal C}\) is the corresponding geodesic torsion.

\begin{proposition}[Orientation-dependent shape-operator reconstruction]
\label{prop:inlet-three-dimensional-reconstruction}
For a steady \(C^1\), piecewise-\(C^2\) shock junction,
\begin{equation}
  \boxed{
  S^+=S^-+\kappa\,\eta_s\otimes\eta_s^\flat
  =
  \begin{pmatrix}
    k_{\mathcal C} & \tau_{\mathcal C}\\
    \tau_{\mathcal C} & k_\eta+\kappa
  \end{pmatrix}.}
  \label{eq:inlet-shape-operator-update}
\end{equation}
Consequently,
\begin{equation}
  \cjump{H}=\frac{\kappa}{2},
  \qquad
  \boxed{\cjump{K_G}=k_{\mathcal C}\kappa
  =k_{\mathcal C}K_{\rm loc}a_{\rm in}.}
  \label{eq:inlet-mean-gaussian-law}
\end{equation}
For any normal section making an angle \(\theta\) with the interaction track,
\begin{equation}
  \cjump{k_n}=\kappa\sin^2\theta .
  \label{eq:inlet-directional-curvature-law}
\end{equation}
\end{proposition}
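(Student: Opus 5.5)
The plan is to specialise Proposition~\ref{prop:spacetime-rank-one} to a steady shock and then read off the invariants by elementary linear algebra in the Darboux basis.

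\emph{Step 1: the rank-one update.} For a steady shock take the spatial unit normal \(N\) as the conormal. Restrict \(\mathcal B_\nu\) to the spatial tangent plane; with index lowered by the induced metric this is the second fundamental form, i.e.\ the shape operator \(S\). The spatial interaction curve has tangent \(\boldsymbol t\). Choose the crossing direction \(\eta=\eta_s\), so the annihilating covector is \(\vartheta=\eta_s^\flat\). Proposition~\ref{prop:spacetime-rank-one} then gives \(\cjump{S}=\alpha\,\eta_s\otimes\eta_s^\flat\). By the remark after that proposition, \(\alpha\) is the cross-track curvature jump \(\kappa\) selected in Proposition~\ref{prop:steady-euler-curvature-coefficient}, since \(\langle\cjump{S}\eta_s,\eta_s\rangle=\alpha\).

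The one point I will check explicitly is the \(C^1\) argument. Because \(N\) is continuous along \(\mathcal C\), its derivative along \(\boldsymbol t\) agrees on both sides. This gives \(S^+\boldsymbol t=S^-\boldsymbol t\). By symmetry of \(S^\pm\) it also gives \(\langle S^+\eta_s,\boldsymbol t\rangle=\langle S^-\eta_s,\boldsymbol t\rangle\). Hence only the \((\eta_s,\eta_s)\) entry can change, which gives the matrix in \eqref{eq:inlet-shape-operator-update}. I also need to note that the orientation of \(\eta_s\) does not affect the update: \(\eta_s\otimes\eta_s^\flat\) is even in \(\eta_s\).

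\emph{Step 2: mean and Gaussian curvature.} The mean curvature is \(H=\tfrac12\operatorname{tr}S\), so \(\cjump{H}=\kappa/2\). For the Gaussian curvature,
\[
\det S^+-\det S^-=k_{\mathcal C}(k_\eta+\kappa)-\tau_{\mathcal C}^2-(k_{\mathcal C}k_\eta-\tau_{\mathcal C}^2)=k_{\mathcal C}\kappa .
\]
Substituting \(\kappa=K_{\rm loc}a_{\rm in}\) from \eqref{eq:inlet-local-scalar-response} completes \eqref{eq:inlet-mean-gaussian-law}.

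\emph{Step 3: directional curvature.} A normal section at angle \(\theta\) to the track has unit direction \(e=\cos\theta\,\boldsymbol t+\sin\theta\,\eta_s\). Its normal curvature is \(k_n=\langle Se,e\rangle\), so
\[
\cjump{k_n}=\kappa\langle\eta_s,e\rangle^2=\kappa\sin^2\theta .
\]

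I expect the main obstacle to be Step 1, specifically the identification of the conormal rank-one amplitude with the spatial cross-track curvature. In the steady case the world hypersurface is the shock surface times the time line. Its conormal can then be normalised to \((0,N)\), and \(\mathcal B_\nu\) has no time components. The identification \(\alpha=\kappa\) is therefore exact here, whereas a moving shock would need the speed-dependent normalisation deferred to \S\ref{sec:moving-interaction-curves}. Everything after Step 1 is direct computation.
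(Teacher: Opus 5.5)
Your proposal is correct and follows the paper's own route. The paper likewise specialises the rank-one law of Proposition~\ref{prop:spacetime-rank-one} to the steady spatial shape operator (with \(\vartheta=\eta_s^\flat\) and \(\alpha=\kappa\)) and reads off trace and determinant in the Darboux frame; your recheck of the \(C^1\) argument and the normal-section computation \(\kappa\langle\eta_s,e\rangle^2=\kappa\sin^2\theta\) only spell out details the paper leaves implicit.
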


The proposition is the steady spatial form of the rank-one junction law in
Section~\ref{sec:local-euler-junction-law}; its proof is obtained by evaluating
that tensor on the Darboux frame and taking trace and determinant.  It has a
simple physical meaning: the weak sheet leaves the along-track normal
curvature and geodesic torsion continuous and deposits curvature only in the
cross-track direction.  If the interaction track is asymptotic,
\(k_{\mathcal C}=0\), Gaussian curvature is preserved even though the shape
operator and mean curvature change.  If the track is not asymptotic, the same
normal-plane response produces a non-zero intrinsic curvature junction.

For dimensional interpretation we use a local length scale \(L_0\) and the
upstream acoustic pressure scale \(\rho_1c_1^2\):
\begin{equation}
  a_{\rm in}=\frac{L_0}{\rho_1c_1^2}
  \sjump{D_{\eta_s}p}_{\Gamma},
  \qquad
  L_0\kappa_{\rm physical}=K_{\rm loc}a_{\rm in}.
  \label{eq:inlet-physical-normalization}
\end{equation}
Thus a weak discontinuity has no pressure jump; \(a_{\rm in}\) measures a
pressure-gradient jump.  This normalization is used below.

To isolate the difference between normal-plane scattering and
three-dimensional embedding, consider the inlet-inspired saddle patch used by
\citet{ZhangEtAl2025},
\begin{equation}
  \Sigma^-:\qquad x-\frac{y^2}{5}+\frac{z^2}{5}=0,
  \qquad M_1=3,
  \label{eq:inlet-saddle-surface}
\end{equation}
and the event
\begin{equation}
  P=\left(\frac35,2,1\right),
  \qquad
  N(P)=\frac{(5,-4,2)}{3\sqrt5},
  \qquad
  K_G^-(P)=-\frac4{81}.
  \label{eq:inlet-example-point}
\end{equation}
Two stationary acoustic sheets are chosen so that their complete active
normal-plane boundary problems coincide, while their intersection tracks have
different embeddings in the same shock surface.  The adapted frames are
\begin{equation}
\begin{aligned}
  \boldsymbol{t}_A&=\frac{(2,5,5)}{3\sqrt6},
  &\eta_A&=\frac{(10,7,-11)}{3\sqrt{30}},\\
  \boldsymbol{t}_B&=\frac{(-2,1,7)}{3\sqrt6},
  &\eta_B&=\frac{(10,13,1)}{3\sqrt{30}}.
\end{aligned}
  \label{eq:inlet-two-darboux-frames}
\end{equation}
The exact acoustic conormals are given in the supplementary material.  In
both cases
\begin{equation}
  \frac{a_1}{c_1}=\sqrt5,
  \qquad
  \frac{b}{c_1}=\sqrt{\frac{10}{3}},
  \qquad
  M_{\Pi2}^2=\frac{35}{17},
  \qquad
  K_{\rm loc}=0.1778567623,
  \label{eq:inlet-common-active-data}
\end{equation}
whereas the passive line-tangent velocities have opposite signs.  The two
tracks are shown in Figure~\ref{fig:inlet-saddle-geometry}; track \(A\) is an
asymptotic ruling and track \(B\) is generic.

\begin{figure}
  \centering
  \includegraphics[width=0.82\columnwidth]{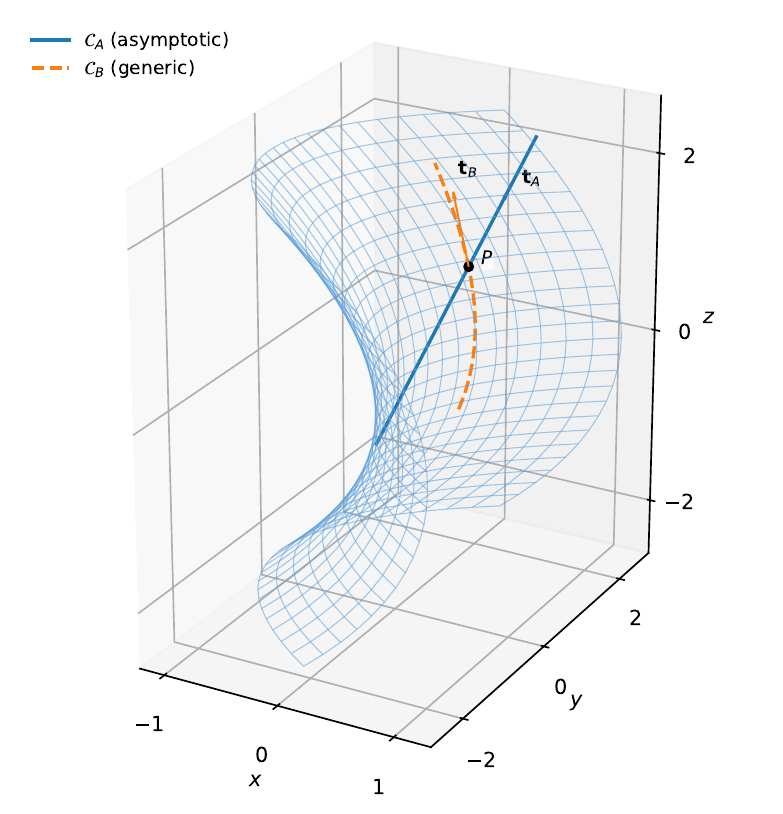}
  \caption{The inlet-inspired saddle shock and the two acoustic intersection
  tracks through \(P\).  The blue track \(\mathcal C_A\) is asymptotic at the
  event, while \(\mathcal C_B\) is generic.  Their active normal-plane Euler
  problems are locally isomorphic.}
  \label{fig:inlet-saddle-geometry}
\end{figure}

The relevant Darboux components are
\begin{equation}
\begin{array}{c|ccc}
  &k_{\mathcal C}&\tau_{\mathcal C}&k_\eta\\ \hline
 A&0&2/9&-8\sqrt5/225\\
 B&-16\sqrt5/135&2/135&56\sqrt5/675.
\end{array}
  \label{eq:inlet-darboux-table}
\end{equation}
Both rows have determinant \(-4/81\).

\begin{theorem}[Identical active scattering, distinct intrinsic junctions]
\label{thm:inlet-controlled-pair}
The two configurations have identical incident active traces, outgoing scalar
amplitudes, \(K_{\rm loc}\), and \(\kappa\), but
\begin{equation}
  K_{G,A}^+=-\frac4{81},
  \qquad
  K_{G,B}^+=-\frac4{81}-\frac{16\sqrt5}{135}\,\kappa .
  \label{eq:inlet-two-gaussian-responses}
\end{equation}
Thus the complete active normal-plane Euler problem does not determine the
intrinsic response of the three-dimensional shock.
\end{theorem}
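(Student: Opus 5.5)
The argument has two parts: first show that the two active normal-plane problems coincide, then apply the shape-operator reconstruction of Proposition~\ref{prop:inlet-three-dimensional-reconstruction} to the two different Darboux frames.

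For the first part, I will use Theorem~\ref{thm:spacetime-four-plus-one}. The differentiated shock problem splits into the active four-variable block in \(\Pi=\operatorname{span}\{N,\eta_s\}\) and the passive line-tangent scalar. The active block, with its coefficient matrices \(\widehat{\mathcal A}_\xi(\widehat Y_i)\), its geometric column \(\widehat h_\alpha\), and its mode bases, depends on the state only through \(\widehat Y=(\rho,a,b,\mathfrak s)\) expressed in the adapted frame. It depends on the incident conormal only through its \((N,\eta_s)\) components.

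I will therefore check, for tracks \(A\) and \(B\), the following items:
\begin{enumerate}
\item the normal Mach number is \(a_1/c_1=\sqrt5\) in both frames, which holds because \(N(P)\) is common;
\item the in-plane sweep is \(b/c_1=\sqrt{10/3}\) in both frames;
\item the in-plane components of the two acoustic conormals agree;
\item the incident mode branch \(\tau_{\rm in}\) and the normalized trace are the same.
\end{enumerate}
I will take the conormals from the supplementary data and verify the decomposition by projecting the upstream velocity \(\boldsymbol u_1\) onto \((N,\eta_A,\boldsymbol t_A)\) and \((N,\eta_B,\boldsymbol t_B)\). The line-tangent components should come out with opposite signs, and Corollary~\ref{cor:active-passive-selection}(iii) says this passive drift does not matter.

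With identical active data, the boundary system \eqref{eq:local-euler-boundary-system} is the same for both tracks. Hence the outgoing amplitudes agree, and by \eqref{eq:steady-euler-upstream-curvature-law} so do \(K_{\rm loc}\) and \(\kappa=K_{\rm loc}a_{\rm in}\). For the same reason \(M_{\Pi2}^2=35/17\), from \eqref{eq:inlet-common-active-data}, is common.

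For the second part, I will compute the Darboux components at \(P\) from the second fundamental form of \(F=x-y^2/5+z^2/5\). At \(P\) we have \(|\nabla F|=3\sqrt5/5\), and the shape operator is \(-\mathrm{Hess}F/|\nabla F|\) restricted to \(T_P\Sigma\). This gives \(k_{\mathcal C}=\mathrm{II}(\boldsymbol t,\boldsymbol t)\), \(\tau_{\mathcal C}=\mathrm{II}(\boldsymbol t,\eta)\) and \(k_\eta=\mathrm{II}(\eta,\eta)\) for each frame, reproducing table~\eqref{eq:inlet-darboux-table}. In particular \(\boldsymbol t_A\) gives \(-25+25=0\), so \(A\) is asymptotic. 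As a consistency check, I will confirm that both determinants equal \(K_G^-(P)=-4/81\).

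Then \eqref{eq:inlet-mean-gaussian-law} gives \(K_G^+=K_G^-+k_{\mathcal C}\kappa\). Substituting \(k_{\mathcal C,A}=0\) and \(k_{\mathcal C,B}=-16\sqrt5/135\) yields \eqref{eq:inlet-two-gaussian-responses}. Because \(\kappa\neq0\) (since \(K_{\rm loc}\neq0\) and \(a_{\rm in}\neq0\)), the two Gaussian curvatures differ. This proves the final assertion: the active problem is identical, but the intrinsic responses are distinct.

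The main obstacle is the first part. I must verify that the two sheets really produce the same active block, including the sign of \(b\) relative to \(\eta_s\). That sign fixes \(\chi=\operatorname{sgn}b\) and the outgoing branch, and it depends on the orientation convention \(\eta_s=N\times\boldsymbol t\) together with the incident family. I would attempt this first by explicit frame projection, checking that \(\eta_A\cdot\boldsymbol u_1=\eta_B\cdot\boldsymbol u_1\) and that the conormals' \(\eta\)-components match with the same sign.
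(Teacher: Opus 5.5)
Your proposal is correct and follows the same route as the paper. You show that the two active normal-plane problems coincide, because $\boldsymbol u_1\cdot N$ and $\boldsymbol u_1\cdot\eta_j$ agree and the line-tangent drift has opposite signs. You then apply the rank-one update $\det S^+=\det S^-+k_{\mathcal C}\kappa$ to the Darboux table recomputed from $-\mathrm{Hess}F/|\nabla F|$, which gives $k_{\mathcal C,A}=0$ and $k_{\mathcal C,B}=-16\sqrt5/135$. The orientation issue you flag is harmless: both listed frames satisfy $\eta_j=\boldsymbol t_j\times N$, so any sign convention changes $b$ and $\tau_{\mathcal C}$ identically in the two cases and leaves $k_{\mathcal C}$ and $K_G$ unchanged.
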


This comparison is deliberately controlled rather than representative of a
complete inlet solution: it holds the finite shock strength and all active
scattering data fixed and changes only the embedding of the interaction
track.  The same construction applies to acoustic, entropy and in-plane
vortical incidence because their active boundary problems are identical in
frames \(A\) and \(B\); line-tangent vorticity remains curvature-transparent.

The geometric difference propagates directly into one-sided flow gradients.
For either smooth side \(\varepsilon\in\{-,+\}\) and each tangent direction
\(X\), differentiated Rankine--Hugoniot conditions and the steady Euler
equations give
\begin{equation}
\begin{split}
  \mathcal A_N(Y_2)D_XY_2^\varepsilon
  &=\mathcal A_N(Y_1)D_XY_1^\varepsilon
    +\sjump{F_{S^\varepsilon X}}_\Sigma,\\
  \mathcal A_N(Y_2)D_NY_2^\varepsilon
  &+\mathcal A_{\eta_s}(Y_2)D_{\eta_s}Y_2^\varepsilon
   +\mathcal A_{\boldsymbol{t}}(Y_2)D_{\boldsymbol{t}}Y_2^\varepsilon=0.
\end{split}
  \label{eq:inlet-one-sided-reconstruction}
\end{equation}
The first line inserts the one-sided shape operator; the second reconstructs
the normal derivative.  Define
\begin{equation}
  \widehat{\nabla p_2}=\frac{L_0}{\rho_1c_1^2}\nabla p_2,
  \qquad
  \widehat{\boldsymbol{\omega}}_2
  =\frac{L_0}{c_1}\nabla\times\boldsymbol{u}_2.
  \label{eq:inlet-gradient-vorticity-normalization}
\end{equation}
For unit acoustic incidence the pre-interaction vectors reconstructed in the
two frames agree in the fixed laboratory basis, providing a coordinate check.
Across the junction,
\begin{equation}
\begin{split}
  \cjump{\widehat{\nabla p_2}}^{(j)}
  &=4.366454\,
    \bigl(-0.400525N+0.916286\eta_j\bigr),\\
  \cjump{\widehat{\boldsymbol{\omega}}_2}^{(j)}
  &=-0.595955\,\boldsymbol{t}_j,
  \qquad j=A,B.
\end{split}
  \label{eq:inlet-gradient-vorticity-jumps}
\end{equation}
The coefficients are the common outgoing acoustic and in-plane shear
amplitudes; only their physical directions differ.

\begin{table}
  \caption{Post-interaction downstream observables for unit acoustic
  incidence.  The angle is between the two laboratory-frame vectors in cases
  \(A\) and \(B\).  Full vector components are given in the supplementary
  material.}
  \label{tab:inlet-postshock-observables}
  \centering
  \begin{tabular}{lccc}
    observable & \(\lvert\cdot\rvert_A\) & \(\lvert\cdot\rvert_B\) & angle\\
    \hline
    \(\widehat{\nabla p_2^+}\) & 6.2272 & 5.4882 & \(31.59^\circ\)\\
    \(\widehat{\boldsymbol{\omega}}_2^+\) & 0.8413 & 1.0813 & \(25.64^\circ\)
  \end{tabular}
\end{table}

\begin{proposition}[Identical active scattering, distinct downstream flow]
\label{prop:inlet-downstream-observables}
Cases \(A\) and \(B\) have identical active normal-plane scattering and the
same pre-interaction laboratory-frame pressure-gradient and vorticity vectors,
but their post-interaction vectors have the distinct magnitudes and directions
listed in Table~\ref{tab:inlet-postshock-observables}.  Hence the embedding of
an otherwise identical modal response changes the three-dimensional
post-shock gradient and rotational structure.
\end{proposition}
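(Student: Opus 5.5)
The plan is to treat the claim as a constructive one-sided reconstruction. I evaluate \eqref{eq:inlet-one-sided-reconstruction} on each smooth side in each frame, express the results in the fixed laboratory basis, and compare them.

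\emph{Step 1: pre-interaction state.} On \(\Sigma^-\) both cases share the same upstream state (\(M_1=3\), same \(N(P)\)). For the \(-\) side, I write \(\sjump{F_{S^-X}}_\Sigma\) using \(S^-\) from \eqref{eq:inlet-darboux-data} with the entries of \eqref{eq:inlet-darboux-table}. I solve the first line of \eqref{eq:inlet-one-sided-reconstruction} for \(D_{\eta_j}Y_2^-\) and \(D_{\boldsymbol t_j}Y_2^-\), and the second line for \(D_NY_2^-\). Because \(S^-\) is a frame-independent tensor and the upstream gradients are frame-independent, the resulting \(\nabla p_2^-\) and \(\nabla\times\boldsymbol u_2^-\) are tensorial. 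Their laboratory components must therefore agree between \(A\) and \(B\). I will verify this numerically as the stated coordinate check; it gives the ``same pre-interaction vectors'' part of the claim.

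\emph{Step 2: the jump.} The incident acoustic trace \(a_{\rm in}=1\), together with the outgoing modes and \(\kappa=K_{\rm loc}\), solves the active block. By Theorem~\ref{thm:spacetime-four-plus-one} and Corollary~\ref{cor:active-passive-selection}, this solution is identical in the two adapted frames, since the active data \eqref{eq:inlet-common-active-data} coincide and the differing passive drift \(w_\Sigma\) does not enter. The downstream jump \(\cjump{D Y_2}\) is the outgoing acoustic plus in-plane shear trace. This yields \eqref{eq:inlet-gradient-vorticity-jumps}: identical frame components along \((N,\eta_j,\boldsymbol t_j)\). The vorticity of an in-plane shear sheet with conormal in \(\Pi\) points along \(\boldsymbol t_j\), and the pressure jump lies in \(\Pi\).

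\emph{Step 3: comparison.} I form \(\widehat{\nabla p_2^+}=\widehat{\nabla p_2^-}+\cjump{\widehat{\nabla p_2}}^{(j)}\) and likewise for \(\widehat{\boldsymbol\omega}_2^+\). I then convert to laboratory components using \eqref{eq:inlet-two-darboux-frames} and compute norms and the angles \(\arccos\) of normalized inner products. The jumps are fixed-coefficient vectors attached to the rotated frames \((\eta_A,\boldsymbol t_A)\neq(\eta_B,\boldsymbol t_B)\), while the base vectors coincide. The sums therefore differ whenever the jump vectors differ, which holds because \(\eta_A\neq\eta_B\) and \(\boldsymbol t_A\neq\pm\boldsymbol t_B\). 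The magnitudes differ because the base vector has different projections on the two frames. The resulting values are those in Table~\ref{tab:inlet-postshock-observables}.

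\emph{Main obstacle.} The main difficulty is Step 1: correctly assembling \(\sjump{F_{S^-X}}_\Sigma\) and the tangential flux Jacobians \(\mathcal A_{\eta_s},\mathcal A_{\boldsymbol t}\) including the passive \(w\) component, with consistent laboratory-fixed angle conventions. Errors there would break the frame-agreement check. I would first validate this step by confirming tensorial agreement of the pre-interaction vectors before trusting the final angles.
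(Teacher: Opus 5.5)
Your proposal is correct and follows the paper's own route. It reconstructs the pre-interaction side with \eqref{eq:inlet-one-sided-reconstruction} and checks that the laboratory vectors agree in frames \(A\) and \(B\); it takes the junction jumps \eqref{eq:inlet-gradient-vorticity-jumps} from the common active modal amplitudes, which are unaffected by the opposite passive drifts and simply attached to \((N,\eta_j,\boldsymbol t_j)\); and it adds and compares in the laboratory basis to obtain Table~\ref{tab:inlet-postshock-observables}. One small imprecision: the downstream trace jump \(\cjump{DY_2}\) also contains the non-zero outgoing entropy mode, but this is harmless here because that mode carries neither a pressure-gradient nor a vorticity jump.
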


For entropy incidence the corresponding pressure-gradient and vorticity
angles are \(13.35^\circ\) and \(51.62^\circ\); for in-plane vortical incidence
they are \(8.77^\circ\) and \(44.38^\circ\).  The complete vectors, exact
acoustic-plane construction, Gaussian-curvature plot and residual checks are
reported in the supplementary material.  Central differences of the exact
perfect-gas shock map give a maximum discrepancy of \(9.2\times10^{-10}\)
in standard double precision and \(8.0\times10^{-27}\) in 80-digit
arithmetic with step \(10^{-13}\); the modal and steady-Euler residuals
remain below \(2.3\times10^{-15}\) in the double-precision
implementation.  The controlled pair therefore exhibits a genuine
downstream gas-dynamic distinction, not merely a difference of surface
invariants.

\section{Moving spherical shocks and near-umbilic response}
\label{sec:moving-interaction-curves}

The preceding sections determined the scalar junction amplitude at a frozen
interaction event and reconstructed its three-dimensional spatial effect.  We
now allow both the shock and the intersection curve to move.  The essential
point is that the same spacetime rank-one tensor simultaneously controls the
jump of shock curvature, the transverse gradient of normal shock speed and
the intrinsic normal-acceleration trace.  A constant inertial Galilean boost
then reduces the local Euler coefficient to the steady formula evaluated at
relative velocities.  The expanding spherical example below has non-zero
shock speed and non-zero tracking speed and therefore tests both parts of the
construction.  Because a sphere is exactly umbilic, it also provides the
natural bridge to the compact near-umbilic result at the end of the section.

Let
\begin{equation}
  \nu=N^\flat-V_n\,dt,
  \qquad
  \mathsf T=\partial_t+V_nN,
  \qquad
  \mathsf Z=\mathsf T+q\eta_s,
  \label{eq:moving-gauge-tracking}
\end{equation}
where \(V_n\) is the normal shock speed and \(q\) is the signed velocity with
which the interaction curve moves across the shock in the \(\eta_s\)
direction.  The quantity
\begin{equation}
  \frac{\delta V_n}{\delta t}:=\mathsf T(V_n)
  \label{eq:moving-intrinsic-acceleration}
\end{equation}
is the intrinsic normal time derivative of the shock speed.  The covector on
\(T\Sigma\) transverse to the interaction world sheet is
\begin{equation}
  \vartheta=\eta_s^\flat-q\,dt.
  \label{eq:moving-transverse-covector}
\end{equation}

\begin{proposition}[Moving-junction compatibility]
\label{prop:moving-spacetime-compatibility}
For a moving interaction curve,
\begin{equation}
  \cjump{\mathcal B_\nu}
  =\kappa(\eta_s^\flat-q\,dt)
          \otimes(\eta_s^\flat-q\,dt).
  \label{eq:moving-rank-one-law}
\end{equation}
Consequently,
\begin{equation}
  \boxed{
  \begin{aligned}
  \cjump{S}&=\kappa\,\eta_s\otimes\eta_s^\flat,\\
  \cjump{\nabla_{\Sigma_t}V_n}&=-q\kappa\,\eta_s,\\
  \cjump{\dfrac{\delta V_n}{\delta t}}&=q^2\kappa.
  \end{aligned}}
  \label{eq:moving-curvature-speed-acceleration-law}
\end{equation}
The three non-zero traces satisfy
\begin{equation}
  \cjump{\frac{\delta V_n}{\delta t}}\,\cjump{k_\eta}
  -\cjump{\eta_s(V_n)}^2=0,
  \label{eq:moving-rank-one-minor}
\end{equation}
and, together with \(\cjump{K_G}=k_{\mathcal C}\kappa\),
\begin{equation}
  q^2\cjump{K_G}
  =k_{\mathcal C}\cjump{\frac{\delta V_n}{\delta t}}.
  \label{eq:moving-gaussian-acceleration-coupling}
\end{equation}
\end{proposition}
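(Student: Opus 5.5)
The argument has three steps: apply Proposition~\ref{prop:spacetime-rank-one} in the moving spacetime setting, fix the scalar by comparison with a frozen time slice, and read off the spatial, speed-gradient and acceleration traces by evaluating the rank-one tensor on suitable tangent vectors.

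\textbf{Identifying the transverse covector.} Proposition~\ref{prop:spacetime-rank-one} gives \(\cjump{\mathcal B_\nu}=\alpha\,\vartheta\otimes\vartheta\) for any covector \(\vartheta\) on \(T\Sigma\) that annihilates \(T\mathcal C\) and satisfies \(\vartheta(\eta)=1\). I will take \(\eta=\eta_s\). The tangent space of the world sheet \(\mathcal C\) is spanned by \(\boldsymbol t\) and the tracking vector \(\mathsf Z=\mathsf T+q\eta_s\). With the gauge \eqref{eq:moving-gauge-tracking} one has \(dt(\mathsf T)=1\), \(N^\flat(\eta_s)=0\) and \(\eta_s^\flat(\mathsf T)=0\). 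Hence \(\vartheta=\eta_s^\flat-q\,dt\) satisfies \(\vartheta(\boldsymbol t)=0\), \(\vartheta(\mathsf Z)=q-q=0\) and \(\vartheta(\eta_s)=1\). This is exactly \eqref{eq:moving-transverse-covector}.

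\textbf{Fixing the scalar.} To see that \(\alpha=\kappa\), I restrict to spatial vectors \(X,Y\in T\Sigma_t\). There \(\nu=N^\flat-V_n\,dt\) gives
\[
  \mathcal B_\nu(X,Y)=-\langle D_XN,Y\rangle+(X V_n)\,dt(Y)=\langle SX,Y\rangle ,
\]
because \(dt(Y)=0\). The spatial restriction of \(\mathcal B_\nu\) is therefore the shape operator. Setting \(X=Y=\eta_s\) identifies \(\alpha\) with the cross-track curvature jump \(\kappa\), which proves \eqref{eq:moving-rank-one-law}. It then gives \(\cjump{S}=\kappa\,\eta_s\otimes\eta_s^\flat\), since \(\vartheta\) restricts to \(\eta_s^\flat\) on spatial vectors.

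\textbf{The mixed and time components; main obstacle.} I will evaluate \(\mathcal B_\nu\) on the pairs \((\eta_s,\mathsf T)\) and \((\mathsf T,\mathsf T)\). Both need a careful computation of \(D\nu\) along \(\mathsf T\) with the chosen ambient extension of \(\nu\). Only the difference between the two smooth pieces is required, and the extension-dependent terms cancel in that difference. This is because \(\nu\) is continuous along \(\mathcal C\) and \(\mathcal B_\nu\) is symmetric.

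For the mixed component, symmetry lets me compute \(-(D_{\eta_s}\nu)(\mathsf T)\). Differentiating \(\nu=N^\flat-V_n\,dt\) along \(\eta_s\) and pairing with \(\mathsf T\) gives \(-\langle D_{\eta_s}N,V_nN\rangle+\eta_s(V_n)\). The first term vanishes because \(N\) has unit length. The mixed component is therefore \(\eta_s(V_n)\), up to a sign fixed by the convention for \(\mathcal B\).

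For the time component, \(-(D_{\mathsf T}\nu)(\mathsf T)\) reduces in the same way to the intrinsic time derivative \(\mathsf T(V_n)\), again up to that sign.

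The rank-one law supplies the right-hand sides. The mixed value is \(\kappa\vartheta(\eta_s)\vartheta(\mathsf T)=-q\kappa\), and the time value is \(\kappa q^2\). The gradient of \(V_n\) along \(\boldsymbol t\) has no jump, because \(\vartheta(\boldsymbol t)=0\). This yields \(\cjump{\nabla_{\Sigma_t}V_n}=-q\kappa\,\eta_s\) and \(\cjump{\delta V_n/\delta t}=q^2\kappa\).

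Tracking the sign convention consistently through these two components is the step I expect to be delicate. I will check it with the static case \(q=0\), where the speed and acceleration jumps must vanish. I will also check it against the fact that \(\mathsf Z\) lies in \(T\mathcal C\): this forces \(\cjump{\mathcal B_\nu}(\mathsf Z,\cdot)=0\), which is a cross-check of the relative sign between the \(\eta_s\) and \(\mathsf T\) components.

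\textbf{The two relations.} Equation \eqref{eq:moving-rank-one-minor} follows from the three jumps just obtained. The acceleration jump is \(q^2\kappa\), the cross-track curvature jump is \(\cjump{k_\eta}=\kappa\), and the speed-gradient jump is \(\cjump{\eta_s(V_n)}=-q\kappa\). Hence
\[
  \cjump{\frac{\delta V_n}{\delta t}}\cjump{k_\eta}-\cjump{\eta_s(V_n)}^2=q^2\kappa\cdot\kappa-(q\kappa)^2=0 .
\]
This is the vanishing \(2\times2\) minor of a rank-one form.

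For \eqref{eq:moving-gaussian-acceleration-coupling}, the spatial update \(\cjump{S}=\kappa\,\eta_s\otimes\eta_s^\flat\) has the same form as in Proposition~\ref{prop:inlet-three-dimensional-reconstruction}. Taking determinants therefore gives \(\cjump{K_G}=k_{\mathcal C}\kappa\). Multiplying by \(q^2\) and substituting \(q^2\kappa=\cjump{\delta V_n/\delta t}\) gives the stated coupling.
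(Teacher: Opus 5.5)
Your proposal is correct and is essentially the paper's argument. The paper likewise proves the proposition by expanding the rank-one law of Proposition~\ref{prop:spacetime-rank-one} with $\vartheta=\eta_s^\flat-q\,dt$ and evaluating $\cjump{\mathcal B_\nu}$ on $(\eta_s,\eta_s)$, $(\eta_s,\mathsf T)$ and $(\mathsf T,\mathsf T)$, where $\mathcal B_\nu$ reduces to $\langle S\cdot,\cdot\rangle$, $\eta_s(V_n)$ and $\mathsf T(V_n)$ respectively. Your sign hedge is unnecessary: with the stated convention $\mathcal B_\nu(X,Y)=-(D_X\nu)(Y)$ your own computation already fixes both signs as $+$, and no ambient extension of $\nu$ enters because every derivative is taken along $T\Sigma$.
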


Equation~\eqref{eq:moving-curvature-speed-acceleration-law} determines jumps
between the two smooth shock pieces, not the total acceleration on either
side.  The one-sided pressure and velocity gradients determine the smooth
values of \(V_n\) and \(\delta V_n/\delta t\); the junction law fixes how
those traces differ when the weak sheet passes.  The proof is the direct
expansion of the rank-one law of Proposition~\ref{prop:spacetime-rank-one};
the componentwise conormal calculation is given in the supplementary
material.

To compute \(\kappa\), freeze one interaction event and use the constant
inertial velocity
\begin{equation}
  \boldsymbol c_p=V_nN+q\eta_s.
  \label{eq:moving-eventwise-boost}
\end{equation}
The corresponding Galilean transformation makes the shock and the
interaction line instantaneously stationary and replaces the active velocity
components by
\begin{equation}
  \bar a_i=a_i-V_n,
  \qquad
  \bar b=b-q,
  \qquad
  \bar w_i=w_i.
  \label{eq:moving-relative-components}
\end{equation}

\begin{lemma}[Eventwise inertial Galilean reduction]
\label{lem:eventwise-galilean-reduction}
At each interaction event, the complete differentiated Euler boundary
operator is conjugate to the stationary operator evaluated at the barred
traces.  Hence
\begin{equation}
  \boxed{
  K_{\rm loc}^{\rm moving}
  =K_{\rm loc}^{\rm steady}
   (\bar Y_1,\bar Y_2;\bar q_{\rm in}).}
  \label{eq:moving-galilean-coefficient}
\end{equation}
\end{lemma}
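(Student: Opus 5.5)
The proof applies the Galilean change of variables \(\boldsymbol x'=\boldsymbol x-\boldsymbol c_pt\), \(t'=t\), with \(\boldsymbol c_p\) the constant vector \eqref{eq:moving-eventwise-boost} frozen at the event, and tracks every object in the boundary system \eqref{eq:local-euler-boundary-system}: the incident trace, the outgoing modal columns, the geometric column and the unknown \(\alpha\). The Euler equations are Galilean invariant: density, pressure and entropy are unchanged, \(\boldsymbol u'=\boldsymbol u-\boldsymbol c_p\), and the conservative variables transform by a state-dependent invertible linear map \(G(\boldsymbol c_p)\). This map mixes momentum and energy with mass, and energy with momentum, through terms linear and quadratic in \(\boldsymbol c_p\). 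I would write the spacetime flux contracted with a covector as \(\mathfrak F_\xi\). The transformation acts on covectors by the pullback \(\xi=\xi_0dt+\boldsymbol v\mapsto(\xi_0+\boldsymbol c_p\cdot\boldsymbol v)dt'+\boldsymbol v\). The first step is the flux identity \(\mathfrak F'_{\xi'}(U')=G\,\mathfrak F_\xi(U)\). Differentiating it in \(U\) gives \(\mathcal A'_{\xi'}(U')\,D_UU'=G\,\mathcal A_\xi(U)\). In primitive variables \(D_UU'\) is the identity apart from the velocity shift, so the symbols are conjugate with the common factor \(G\) on the left.

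Next I would apply this to the specific covectors. The shock conormal \eqref{eq:moving-gauge-tracking} transforms as \(\nu=N^\flat-V_ndt\mapsto N^\flat\), since \(\boldsymbol c_p\cdot N=V_n\). The transverse covector \eqref{eq:moving-transverse-covector} transforms as \(\vartheta=\eta_s^\flat-q\,dt\mapsto\eta_s^\flat\), since \(\boldsymbol c_p\cdot\eta_s=q\). These are exactly the steady conormal and the steady transverse covector. The Doppler factors \(\mu_\xi\) become the barred combinations of \eqref{eq:moving-relative-components}, and \(w\) is unchanged because \(\boldsymbol c_p\) has no \(\boldsymbol t\) component. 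As a result, each column of \eqref{eq:local-euler-boundary-system} is mapped to the matching stationary column evaluated at the barred traces, all premultiplied by the same \(G\). This covers the outgoing modal traces, which are characteristic for the transformed covectors, the incident trace \(\bar q_{\rm in}\), and \(h_\alpha=-\sjump{\mathfrak F_{\widetilde\vartheta}}_\Sigma\). Here I would need \(G\) to be the same on both sides of the shock. It is, because \(G\) depends only on \(\boldsymbol c_p\) acting linearly on the conservative variables. The affine terms in \(U\) are linear in \(\rho\), \(\rho\boldsymbol u\) and \(E\), so they apply uniformly to \(U_1\) and \(U_2\).

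Because the entire system is premultiplied by one invertible \(G\), the left annihilator transforms as \(\psi'^{\mathsf T}=\psi_{\rm loc}^{\mathsf T}G^{-1}\). The ratio \eqref{eq:spacetime-scalar-law} is therefore unchanged. The unknown \(\alpha\) is also unchanged: \(\vartheta\mapsto\eta_s^\flat\) identifies the moving amplitude with the steady cross-track curvature jump, consistent with Proposition~\ref{prop:moving-spacetime-compatibility}. This gives \eqref{eq:moving-galilean-coefficient}.

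I expect the main obstacle to be handling the trace derivatives \(\delta q_i\) and the extension \(\widetilde\vartheta\) consistently. The change of variables acts on first jets, not only on states. The time-derivative components of the traces pick up \(\boldsymbol c_p\cdot\nabla\) terms, and the normalization of \(\bar q_{\rm in}\) must be matched to the one used in the steady formula. I would settle this by expressing each trace as a rank-one jump \(\ell\,\xi\) along its characteristic covector and checking that \(\ell\) transforms by \(D_UU'\) while \(\xi\) is pulled back as above. Independence of \(h_\alpha\) from the choice of extension, already noted after \eqref{eq:local-differentiated-rh}, removes any ambiguity from the frozen, non-uniformly moving frame.
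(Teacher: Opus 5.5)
Your proposal is correct and follows the same route as the paper. The paper proves the lemma by exact flux conjugacy under the constant eventwise boost \(\boldsymbol c_p\): \(\mathfrak F'_{\xi'}(GU)=G\,\mathfrak F_\xi(U)\), with \(\nu\mapsto N^\flat\) and \(\vartheta\mapsto\eta_s^\flat\), so the whole differentiated boundary system is premultiplied by one invertible \(G\) and the annihilator ratio \(K_{\rm loc}\) is unchanged. One slip: \(G\) is not state-dependent but a constant lower-triangular matrix depending only on \(\boldsymbol c_p\), which is exactly why it is common to both shock traces, as you correctly use later.
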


The boost in Lemma~\ref{lem:eventwise-galilean-reduction} is constant and is
chosen separately at each event; it is not a rotating or accelerating frame
attached to the curved shock.  The exact flux-conjugacy proof and the
associated regression test are given in the supplementary material.  If the
weak-sheet conormal is
\begin{equation}
  \zeta=\zeta_0dt+m_NN^\flat+m_\eta\eta_s^\flat,
\end{equation}
then its tracking velocity is
\begin{equation}
  q=-\frac{\zeta_0+m_NV_n}{m_\eta}.
  \label{eq:moving-tracking-speed}
\end{equation}
The scalar coefficient of Section~\ref{sec:steady-euler-curvature-transfer}
applies when
\begin{equation}
  \bar a_1>c_1,
  \qquad
  0<\bar a_2<c_2,
  \qquad
  \bar a_2^2+\bar b^2>c_2^2.
  \label{eq:moving-mode-gate}
\end{equation}
The last inequality is a relative projected-supersonic condition.  Its
boundary changes the outgoing mode count; it is not a zero of the static
transversality pairing.

Consider now an expanding spherical shock
\begin{equation}
  \Sigma_t=\{x:|x|=R(t)\},
  \qquad
  D(t)=\dot R(t)>0,
  \label{eq:blast-spherical-front}
\end{equation}
propagating into a quiescent perfect gas.  Orient the normal from the
undisturbed exterior gas toward the shocked interior, so that
\begin{equation}
  N=-\boldsymbol e_r,
  \qquad
  V_n=-D,
  \qquad
  S^-=\frac1R I.
  \label{eq:blast-normal-speed-shape}
\end{equation}
This is a genuinely moving, exactly umbilic shock.  Such spherical shocks are
a canonical unsteady setting for shock-acceleration discontinuities
\citep{Taylor1950,Sedov1959,VirgopiaFerraioli1982,RadhaSharmaJeffrey1993}.

At an event \((t_0,P)\), set \(R_0=R(t_0)\), take
\begin{equation}
  \frac{D(t_0)}{c_1}=5,
  \qquad
  \gamma=\frac75,
\end{equation}
and let the incident acoustic sheet have spatial conormal
\begin{equation}
  \boldsymbol m=\frac35N-\frac45\eta_s.
  \label{eq:blast-acoustic-sheet}
\end{equation}
Equation~\eqref{eq:moving-tracking-speed} gives
\begin{equation}
  \frac{V_n}{c_1}=-5,
  \qquad
  \frac{q}{c_1}=-\frac52,
  \qquad
  \frac{\bar a_1}{c_1}=5,
  \qquad
  \frac{\bar b}{c_1}=\frac52.
  \label{eq:blast-relative-state}
\end{equation}
The downstream state satisfies
\begin{equation}
  \frac{\rho_2}{\rho_1}=5,
  \qquad
  \frac{\bar a_2}{c_1}=1,
  \qquad
  \frac{c_2^2}{c_1^2}=\frac{29}{5},
  \qquad
  \bar M_{\Pi2}^2=\frac54.
  \label{eq:blast-downstream-state}
\end{equation}
Here
\begin{equation}
  B_{\rm crit}=\sqrt{R_{7/5}(25)}=\sqrt{\frac{24}{5}}
  =2.190890\ldots,
  \qquad
  \frac{B}{B_{\rm crit}}=1.14109\ldots,
  \label{eq:blast-sonic-margin}
\end{equation}
so the event lies inside the scalar mode sector with a finite but moderate
\(14.1\%\) margin in relative tangential speed.

For pressure-gradient-normalized acoustic incidence, the coefficient is
entirely rational:
\begin{equation}
  \Delta_{\rm loc}=\frac{14}{29},
  \qquad
  \omega_\chi(\widetilde q_{\rm in})=\frac{48}{145},
  \qquad
  \boxed{K_{\rm loc}^{\rm blast}=\frac{24}{35}.}
  \label{eq:blast-exact-coefficient}
\end{equation}
By Lemma~\ref{lem:eventwise-galilean-reduction}, the same coefficient is
obtained for the frozen stationary problem with relative upstream components
\((\bar a_1,\bar b)=(5,5/2)c_1\), although the laboratory event has
\(V_n=-5c_1\).  The sphere therefore provides a physical realization of the
eventwise Galilean reduction.

With \(L_0=R_0\), write
\(\widehat\kappa=R_0\kappa_{\rm physical}\).  Then
\begin{equation}
  \boxed{
  \widehat\kappa=\frac{24}{35}a_{\rm in},
  \qquad
  R_0^2\cjump{K_G}=\frac{24}{35}a_{\rm in},}
  \label{eq:blast-curvature-response}
\end{equation}
while the kinematic traces are
\begin{equation}
  \boxed{
  \frac{R_0}{c_1}\cjump{\eta_s(V_n)}=\frac{12}{7}a_{\rm in},
  \qquad
  \frac{R_0}{c_1^2}
  \cjump{\frac{\delta V_n}{\delta t}}
  =\frac{30}{7}a_{\rm in}.}
  \label{eq:blast-speed-acceleration-jumps}
\end{equation}
If
\begin{equation}
  A_0=\left(\frac{\delta V_n}{\delta t}\right)^-
\end{equation}
is the smooth one-sided blast-wave acceleration, then
\begin{equation}
  \left(\frac{\delta V_n}{\delta t}\right)^+
  =A_0+\frac{30}{7}\frac{c_1^2}{R_0}a_{\rm in}.
  \label{eq:blast-total-versus-jump}
\end{equation}
The theory therefore adds a precisely determined acceleration discontinuity
to an otherwise arbitrary smooth blast-wave evolution.  At the same time,
the initially repeated curvature \(1/R_0\) splits: the \(\boldsymbol t\)
curvature remains \(1/R_0\), whereas the \(\eta_s\)-curvature becomes
\((1+\widehat\kappa)/R_0\).  The moving spherical shock thus realizes both
the classical scalar acceleration jump and the additional multidimensional
shape-operator response.

The exact sphere corresponds to an umbilic.  The local response near, but not
exactly at, an umbilic is summarized by the following compact result.  In a
principal tangent frame let
\begin{equation}
  S^-=\begin{pmatrix}k+\delta&0\\0&k-\delta\end{pmatrix},
  \qquad
  \eta_s=\cos\alpha\,e_1+\sin\alpha\,e_2,
  \qquad
  |\delta|<|k|.
  \label{eq:near-umbilic-base}
\end{equation}
Define the complex curvature anisotropy
\begin{equation}
  \mathfrak a(S)=(S_{11}-S_{22})+2iS_{12}.
\end{equation}

\begin{proposition}[Umbilic splitting and near-umbilic frame rotation]
\label{prop:bow-umbilic-splitting}
The junction law gives
\begin{equation}
  \boxed{\mathfrak a(S^+)=2\delta+\kappa e^{2i\alpha}.}
  \label{eq:bow-anisotropy-addition}
\end{equation}
Hence
\begin{equation}
  |k_1^+-k_2^+|=|2\delta+\kappa e^{2i\alpha}|,
  \qquad
  2\Delta\psi=\arg\!\left(1+\frac{\kappa}{2\delta}
  e^{2i\alpha}\right)
  \quad(\bmod\ \pi)
  \label{eq:near-umbilic-gap-angle}
\end{equation}
for \(\delta\neq0\).  At an exact umbilic \((\delta=0)\), every non-zero
\(\kappa\) selects \((\eta_s,\boldsymbol t)\) as the post-interaction
principal frame, with principal curvatures \(k+\kappa\) and \(k\).  If
\(r=\kappa/(2\delta)\) and \(|r|<1\), then
\begin{equation}
  \boxed{|\Delta\psi|\leq\frac12\arcsin|r|,}
  \qquad
  \cos2\alpha=-r
  \quad\text{at equality}.
  \label{eq:bow-sharp-rotation-bound}
\end{equation}
For a near-umbilic family with \(\kappa=2\lambda\delta\) and
\(|\delta|/|k|\to0\), the ratio \(|\kappa|/|k|\to0\) while
\(\Delta\psi\) can remain finite.
\end{proposition}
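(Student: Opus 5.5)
The proof uses the rank-one junction law of Proposition~\ref{prop:inlet-three-dimensional-reconstruction}, \(S^+=S^-+\kappa\,\eta_s\otimes\eta_s^\flat\), written in the principal frame \((e_1,e_2)\) of \(S^-\).

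\emph{Step 1: the anisotropy law.} With \(\eta_s=\cos\alpha\,e_1+\sin\alpha\,e_2\), the rank-one increment has components
\[
\kappa\begin{pmatrix}\cos^2\alpha&\cos\alpha\sin\alpha\\ \cos\alpha\sin\alpha&\sin^2\alpha\end{pmatrix}.
\]
Its contribution to \(\mathfrak a\) is therefore
\[
\kappa(\cos^2\alpha-\sin^2\alpha)+2i\kappa\sin\alpha\cos\alpha=\kappa e^{2i\alpha}.
\]
Since \(\mathfrak a\) is real-linear in \(S\) and \(\mathfrak a(S^-)=2\delta\), this gives \eqref{eq:bow-anisotropy-addition}.

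\emph{Step 2: gap and axis angle.} For a symmetric \(2\times2\) matrix, the eigenvalue difference is \(|\mathfrak a|\), because \((k_1-k_2)^2=(S_{11}-S_{22})^2+4S_{12}^2\). Under a frame rotation by \(\phi\), \(\mathfrak a\) transforms as \(\mathfrak a\mapsto e^{-2i\phi}\mathfrak a\). It is real and positive exactly when the frame's first axis is the principal direction of the larger eigenvalue. Hence the principal axis angle \(\psi\) satisfies \(2\psi=\arg\mathfrak a\) modulo \(\pi\); the modulus is \(\pi\) rather than \(2\pi\) because the labelling of the two axes is arbitrary. Dividing \eqref{eq:bow-anisotropy-addition} by \(2\delta\) gives \(2\Delta\psi=\arg(1+re^{2i\alpha})\), which is \eqref{eq:near-umbilic-gap-angle}.

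\emph{Step 3: the exact umbilic.} When \(\delta=0\), \(S^-=kI\) and \(S^+=kI+\kappa\,\eta_s\otimes\eta_s^\flat\). This matrix is diagonal in the frame \((\eta_s,\boldsymbol t)\) with eigenvalues \(k+\kappa\) and \(k\). For \(\kappa\neq0\) the eigenvalues are distinct, so that frame is the unique principal frame.

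\emph{Step 4: the sharp bound.} This is the only genuinely non-trivial step. As \(\alpha\) varies, the point \(1+re^{2i\alpha}\) traces the circle of radius \(|r|<1\) centred at \(1\). The maximal argument of a point on this circle is attained on a tangent line from the origin, and equals \(\arcsin|r|\). This yields \(|2\Delta\psi|\le\arcsin|r|\), where the branch of \(\arg\) is taken in \((-\pi/2,\pi/2)\); this branch is legitimate because the real part \(1+r\cos2\alpha\) is positive.

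At the tangency point, the radius vector \(re^{2i\alpha}\) is orthogonal to the point itself, \(1+re^{2i\alpha}\). That is,
\[
\operatorname{Re}\bigl[(1+re^{2i\alpha})\,\overline{re^{2i\alpha}}\bigr]=r\cos2\alpha+r^2=0,
\]
which gives \(\cos2\alpha=-r\) at equality. The point to handle carefully here is the \(\bmod\ \pi\) ambiguity, and choosing the branch as above resolves it.

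\emph{Step 5: the near-umbilic family.} Set \(\kappa=2\lambda\delta\) with \(\lambda\) fixed and \(|\lambda|<1\). Then \(|\kappa|/|k|=2|\lambda|\,|\delta|/|k|\to0\). Meanwhile \(r=\lambda\) is fixed, so choosing \(\cos2\alpha=-\lambda\) gives \(|\Delta\psi|=\tfrac12\arcsin|\lambda|\), which is finite and does not depend on \(\delta\).
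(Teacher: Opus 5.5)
Your proposal is correct and follows the paper's route. You insert the rank-one update \(S^+=S^-+\kappa\,\eta_s\otimes\eta_s^\flat\) in the principal frame of \(S^-\) to obtain \(\mathfrak a(S^+)=2\delta+\kappa e^{2i\alpha}\), and you obtain the sharp bound as the tangent angle subtended at the origin by the circle \(1+re^{2i\alpha}\), with \(\cos2\alpha=-r\) at tangency; both the branch choice and the created-umbilic degeneracy \(1+re^{2i\alpha}=0\) are harmless for \(|r|<1\).
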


Equation~\eqref{eq:bow-anisotropy-addition} follows by inserting
\(S^+=S^-+\kappa\eta_s\otimes\eta_s\) in the principal frame.  The sharp
bound is the tangent angle subtended at the origin by the circle
\(1+r e^{2i\alpha}\); full details, the response diagram and the
created-umbilic condition are given in the supplementary material.  Thus the
Euler coefficient need not become large: an order-one rotation occurs because
the base principal-curvature gap becomes small.  The exact spherical example
shows how the same rank-one junction selects a frame on a moving umbilic
shock, while Proposition~\ref{prop:bow-umbilic-splitting} quantifies the
corresponding sensitivity near an umbilic.

\section{Conclusions}
\label{sec:conclusions}

A weak characteristic sheet crossing a finite-strength shock selects one
scalar junction amplitude through the differentiated Euler boundary problem.
The geometric content of the result is that this scalar is not itself the
three-dimensional response: it is deposited as a rank-one update of the full
shape operator,
\begin{equation}
  \kappa=K_{\rm loc}a_{\rm in},
  \qquad
  \cjump{S}=\kappa\,\eta_s\otimes\eta_s^\flat,
  \qquad
  \cjump{K_G}=k_{\mathcal C}\kappa .
  \label{eq:conclusion-junction-law}
\end{equation}
The local gas dynamics fixes \(\kappa\), while the embedding of the
interaction curve fixes how that amplitude changes the intrinsic shock
geometry and the one-sided flow reconstruction.  In this sense the result
extends the classical scalar acceleration-discontinuity description to a
junction law for a continuously differentiable, piecewise twice-differentiable
three-dimensional front.

The controlled saddle comparison isolates the resulting orientation effect.
Two interactions with identical active normal-plane states, incident traces
and outgoing modal amplitudes have the same \(K_{\rm loc}\) and the same
\(\kappa\), but different Gaussian-curvature changes.  More importantly, the
one-sided Euler reconstruction yields different laboratory-frame downstream
fields: the post-interaction pressure-gradient directions differ by
\(31.6^\circ\) and the vorticity directions by \(25.6^\circ\).  Thus the
three-dimensional embedding changes observable post-shock gradient and
rotational structure even when the complete active scattering problem is
unchanged.

The perfect-gas coefficients identify two further selection mechanisms.
Acoustic incidence has curvature-neutral branches on which finite outgoing
acoustic, entropy and shear waves coexist with zero deposited curvature.
Convected entropy and in-plane vortical sheets deposit curvature with opposite
signs and satisfy
\(K_{\rm vort}=-2K_{\rm ent}/M_{\Pi1}\), a consequence of total-enthalpy
conservation and acoustic orthogonality; line-tangent vorticity remains in the
passive channel.  These results connect the junction law to elementary
components of shock--turbulence interaction rather than to acoustic forcing
alone.

For a moving interaction curve, the same spacetime compatibility condition
also fixes the jumps of transverse shock-speed gradient and intrinsic normal
acceleration.  The expanding spherical example has non-zero shock speed and
non-zero tracking speed, and separates the smooth blast-wave acceleration
from the additional deposited acceleration jump.  At an exact umbilic the
rank-one update selects a principal frame; near an umbilic it can rotate that
frame by an order-one angle when the response is comparable with the small
principal-curvature gap.  These are local first-jet results.  Their role is to
supply interface data for one-sided curved-shock reconstruction and for
shock-fitting or front-tracking calculations; global continuation, wall and
boundary-layer coupling, and subsequent wave transport remain separate
problems.

\begin{bmhead}[Supplementary material.]
Supplementary material is supplied with this paper.  It contains the detailed
spacetime compatibility and Galilean reductions, the full swept-shock
mode-count and perfect-gas shock-polar algebra, modal and downstream-vector
tables, additional moving and near-umbilic benchmarks, the complete planar
shock-polar verification, and reproducibility information for all numerical
coefficients and figures.  The accompanying source archive contains the
scripts and tabulated data used in these calculations.
\end{bmhead}

\begin{bmhead}[Funding.]
This research received no specific grant from any funding agency, commercial
or not-for-profit sectors.
\end{bmhead}

\begin{bmhead}[Declaration of interests.]
The author reports no conflict of interest.
\end{bmhead}

\begin{bmhead}[Data availability statement.]
All data needed to reproduce the reported numerical coefficients, tables and
figures are provided in the supplementary material and the accompanying
source-code and data archive.
\end{bmhead}

\begin{bmhead}[Declaration of use of artificial intelligence (AI).]
OpenAI's ChatGPT was used as an interactive
research and writing assistant to support manuscript organisation and
language editing, explore alternative presentations of author-developed
mathematical arguments, and assist in preparing and reviewing numerical
verification code. Anthropic's Claude was used for critical reading and
editorial suggestions. All mathematical results, derivations, numerical calculations, references,
physical interpretations and final wording were reviewed, revised and
approved by the author. No confidential or unpublished third-party material was
supplied to the tools. The author assumes full responsibility for the
accuracy, integrity and originality of the manuscript.
\end{bmhead}

\begin{bmhead}[Author contributions.]
Alexander Omelchenko: Conceptualization, Formal analysis, Investigation,
Methodology, Software, Validation, Visualization, Writing--original draft, and
Writing--review and editing.
\end{bmhead}

\bibliographystyle{jfm}
\bibliography{refs}

\end{document}